\renewcommand{\paragraph}{%
  \@startsection{paragraph}{4}%
  {\z@}{2.5ex \@plus 1ex \@minus .2ex}{-1em}%
  {\normalfont\normalsize\bfseries}%
}
\definecolor{darkgreen}{rgb}{0,0.5,0}
\definecolor{darkblue}{rgb}{0,0,0.8}
\newtheorem{theorem}{Theorem}[section]
\newtheorem{lemma}[theorem]{Lemma}
\newtheorem{corollary}[theorem]{Corollary}
\newtheorem{definition}{Definition}[section]
\newcommand{\calA}{\ensuremath{\mathcal{A}}}
\newcommand{\calP}{\ensuremath{\mathcal{P}}}
\newcommand{\ignore}[1]{}
\newcommand\questionEq{\stackrel{\mathclap{\normalfont\mbox{?}}}{=}}
\algnewcommand\algorithmicswitch{\textbf{switch}}
\algnewcommand\algorithmiccase{\textbf{case}}
\newcommand{\CONGEST}{\ensuremath{\mathsf{CONGEST}}\xspace}
\newcommand{\LOCAL}{\ensuremath{\mathsf{LOCAL}}\xspace}
\newcommand{\eps}{\varepsilon}
\newcommand{\set}[1]{\left\{#1\right\}}
\DeclareMathOperator{\polylog}{polylog}
\DeclareMathOperator{\poly}{poly}
\newcommand{\complexityclass}[2][]{\ensuremath{\mathsf{#2}\ifthenelse{\isempty{#1}}{}{(#1)}}}
\newcommand{\hide}[1]{}
\newcommand{\FullOrShort}{short}
  \newcommand{\fullOnly}[1]{#1}
  \newcommand{\shortOnly}[1]{}
  \newcommand{\shortOnly}[1]{#1}
  \newcommand{\fullOnly}[1]{}
\begin{document}

\title{On the Use of Randomness in Local Distributed Graph Algorithms}


\author{\\
Mohsen Ghaffari \\
ETH Zurich, Switzerland \\
ghaffari@inf.ethz.ch 
\and
\\ 
Fabian Kuhn \\
University of Freiburg, Germany \\
kuhn@cs.uni-freiburg.de}

\date{}
\maketitle

\begin{abstract}
  \noindent We attempt to better understand randomization in local
  distributed graph algorithms by exploring how randomness is used and
  what we can gain from it:
  \begin{itemize}
  \item We first ask the question of how much randomness is needed to
    obtain efficient randomized algorithms. We show that for \emph{all}
    locally checkable problems for which $\poly\log n$-time
    randomized algorithms exist, there are such algorithms even if
    either (I) there is a only a single (private) independent random bit in each
    $\poly\log n$-neighborhood of the graph, (II) the (private) bits of randomness of different nodes are only $\poly\log n$-wise independent, or 
		(III) there are only $\poly\log n$ bits of global shared randomness (and no private
    randomness).
  \item Second, we study how much we can improve the error probability
    of randomized algorithms. For all locally checkable problems for
    which $\poly\log n$-time randomized algorithms exist, we show
    that there are such algorithms that succeed with probability
    $1-n^{-2^{\eps(\log\log n)^2}}$ and more generally
    $T$-round algorithms, for $T\geq \polylog\ n$, that succeed with probability
    $1-n^{-2^{\eps\log^2T}}$. We also show that
    $\poly\log n$-time randomized algorithms with success
    probability $1-2^{-2^{\log^\eps n}}$ for some $\eps>0$ can be
    derandomized to $\poly\log n$-time deterministic algorithms.
  \end{itemize}
  Both of the directions mentioned above, reducing the amount of randomness and
  improving the success probability, can be seen as partial derandomization of existing randomized algorithms. In all the above
  cases, we also show that any significant improvement of our results
  would lead to a major breakthrough, as it would imply significantly
  more efficient deterministic distributed algorithms for a wide class of problems.
\end{abstract}

\clearpage


\section{Introduction}
\label{sec:intro}

The gap between the complexity of randomized and deterministic distributed algorithms for local graph problems is one of the foundational, deep, and long-standing questions in distributed algorithms. A well-known special case is the question of Linial from 1987\cite{linial1987LOCAL, linial92} about the Maximal Independent Set (MIS) problem: While we have known randomized MIS algorithms that work in $O(\log n)$ rounds of the $\mathsf{LOCAL}$ model---i.e., synchronous message passing rounds---since the celebrated work of Luby\cite{luby86} and Alon, Babai, and Itai\cite{alon86}, Linial's question for obtaining a \emph{deterministic} algorithm that computes an MIS in $\poly(\log n)$ rounds still remains open. There is an abundance of similar questions about other concrete graph problems, several of which remain open. See e.g., the first five problems\footnote{Though, the last two of these are no longer open\cite{FischerGK17, ghaffari2018derandomizing}.} in the open problems chapter of the book by Barenboim and Elkin\cite{barenboimelkin_book}. More generally, we can ask whether
\begin{center}
$\mathsf{P}$-$\mathsf{LOCAL} \questionEq \mathsf{P}$-$\mathsf{RLOCAL}$
\end{center}
Here, $\mathsf{P}$-$\mathsf{LOCAL}$ denotes the family of \emph{locally checkable problems}\footnote{To make the question more widely applicable, we use a relaxed version of local checkability, where the local checking radius can be up to polylogarithmic in $n$. For a precise definition, we refer to \Cref{sec:model}.} that can be solved by deterministic algorithms in $\poly(\log n)$ rounds in $n$-node graphs and $\mathsf{P}$-$\mathsf{RLOCAL}$ denotes the family of \emph{locally checkable problems} that can be solved by randomized algorithms in $\poly(\log n)$ rounds, with success probability $1-1/n$. Both of these are with respect to the $\mathsf{LOCAL}$ model. One may view the above question as an analog of the well known $\mathsf{P}$ vs. $\mathsf{BPP}$ question in centralized computational complexity, i.e., (deterministic) polynomial-time  vs.\ bounded-error probabilistic polynomial-time~\cite[Chapter 7]{arora2009computational}. However, as we will see below, the questions are inherently very different.

In this paper, we try to shed more light on this fundamental question by taking a step back and asking
\emph{``what is the randomness used in these randomized distributed graph algorithms?"} That is, (A) how much randomness is needed, and (B) how strong are the probabilistic guarantees that randomized algorithms can provide? Both of these questions can be used as means for interpolating between 
 randomized and deterministic algorithms: Randomized algorithms, under the standard definition, can use an unbounded number of independent random bits at different nodes and they guarantee success with probability $1-1/n$. Deterministic algorithms use no randomness and they always guarantee success (which is at least as strong as guaranteeing success with probability $1$). 

Before delving into our answers to these questions, let us review some of the recent work centered on the gap between deterministic and randomized algorithms.

\subsection{An Overview of the Recent Developments on DET vs. RAND}

Over the past decade, there has been a number of beautiful developments, which are related to the aforementioned deterministic versus randomized question. We give a brief overview here.\footnote{This is certainly not exhaustive and it probably does not do justice to all the recent progress. We discuss only the cases that are most directly related to the current paper, in our understanding.}

\paragraph{Shattering Method and its Necessity:} One of the influential developments of the past decade in distributed algorithms for local graph problems was the introduction and wide usage of the shattering method. The method, which is inspired by Beck's algorithmic version of the Lov\'{a}sz Local Lemma\cite{beck1991algorithmic}, was first introduced in the distributed setting by Barenboim et al.~\cite{barenboim_symmbreaking}. In a rough sense, the method leads to randomized algorithms with two phases: a first efficient randomized phase that typically works in time that only depends on local graph parameters such as the maximum degree $\Delta$ and leaves a graph made of only small components, e.g., each of $\poly(\log n)$ size; and a second phase that solves each of these connected components separately, all in parallel, using deterministic algorithms. Hence, the dependency on the network size $n$ in randomized algorithms is brought down to the deterministic complexity for networks of size $N=\poly(\log n)$. For instance, for MIS, we know an $O(\log \Delta) + 2^{O(\sqrt{\log\log n})}$-round randomized algorithm with success probability $1-1/n$\cite{gmis}, and the second complexity term here mirrors the $2^{O(\sqrt{\log n})}$ complexity of the best known deterministic MIS algorithm~\cite{panconesi-srinivasan}. We run deterministic algorithms in the second phase, because the usual error probability bound of randomized algorithms, which is $1/\poly(N)$ for $N$-node instances, is not enough for a union bound over all components. Here, the deterministic vs.\ randomized question is more about the \emph{success probability}, rather than the bits of randomness.

More surprisingly, Chang et al.\cite{chang16} showed that the randomized complexity of any locally checkable problem on $n$-node networks (with success probability $1-1/n$) is at least its deterministic complexity on graphs with $\sqrt{\log n}$ nodes. Thus, if one improves the $n$-dependency of the randomized algorithms compared to the shattering-based results above, that improves also the deterministic complexity. This underlines the importance of understanding the complexity of deterministic algorithms, even if eventually we only care about randomized algorithms.


\paragraph{Exponential Separations in the Landscape of Lower Complexities:}
Another significant recent development was the emergence of provable exponential separations between randomized and deterministic algorithms, though in a complexity regime below $O(\log n)$. Brandt et al.\cite{brandt} showed a lower bound of $\Omega(\log\log n)$ on the round complexity of computing a sinkless orientation in constant-degree graphs, which also implied a similar lower bound for $\Delta$-coloring trees of degree $\Delta=O(1)$. 
Chang et al.\cite{chang16} extended these to $\Omega(\log n)$ lower bounds for deterministic algorithms. They also exhibited an $O(\log\log n)$ round randomized algorithm and an $O(\log n)$ round deterministic algorithm for $\Delta$-coloring trees, hence proving that these complexities are tight and they have an exponential separation. Ghaffari and Su~\cite{GS17} later showed that the original problem of sinkless orientation (which is a special case of the Lov\'{a}sz Local Lemma) also exhibits the same exponential separation, by providing a $\Theta(\log\log n)$-round randomized and a $\Theta(\log n)$-round deterministic algorithm for it.
We emphasize that this exponential separation is between complexities that are in $O(\log n)$. To the best of our understanding, this separation has no bearing on the $\mathsf{P}$-$\mathsf{LOCAL}$ vs. $\mathsf{P}$-$\mathsf{RLOCAL}$ question or particular cases of it such as Linial's MIS question.

\paragraph{A Complexity-Theoretic Study and Derandomization:} As a step toward studying deterministic versus randomized complexities, 
Ghaffari et al.\cite{ghaffari2017complexity} introduced the sequential local model $\mathsf{SLOCAL}$: here a sequential algorithm processes vertices in an arbitrary order $v_1, v_2, \dots, v_n$, each time deciding about the output of the vertex $v_i$ that is being processed---e.g., its color in the coloring problem---by reading the current information available within a small $r$-hop neighborhood of $v_i$, including the related topology, and then recording the result (and potentially the gathered information) in the node $v_i$. See \cite{ghaffari2017complexity} for the precise definitions. The parameter $r$ is the locality of such an $\mathsf{SLOCAL}$ algorithm. The model provides a generalization of sequential greedy processes for problems such as MIS and $\Delta+1$ coloring, which both can be solved with locality $1$ in the $\mathsf{SLOCAL}$ model. The model may seem too powerful at first sight, given that it allows sequential processing, but it is no more powerful than adding randomization to the $\mathsf{LOCAL}$ model: It was shown in \cite{ghaffari2017complexity} that any (randomized or deterministic) $\mathsf{SLOCAL}$ algorithm with locality $\poly(\log n)$ for any locally checkable problem can be transformed to a $\poly(\log n)$-round randomized $\mathsf{LOCAL}$ algorithm. Given this, Ghaffari et al.\cite{ghaffari2017complexity} studied the question of $\mathsf{P}$-$\mathsf{SLOCAL}$ vs. $\mathsf{P}$-$\mathsf{LOCAL}$: whether any locally checkable problem that admits a deterministic $\mathsf{SLOCAL}$ algorithm with locality $\poly(\log n)$ can be solved using a $\poly(\log n)$-round deterministic $\mathsf{LOCAL}$ algorithm. A number of problems were shown to be complete with respect to $\mathsf{P}$-$\mathsf{SLOCAL}$\cite{ghaffari2017complexity,ghaffari2018derandomizing}, in the following sense: they admit deterministic $\mathsf{SLOCAL}$ algorithms with locality $\poly(\log n)$ and if one can provide a $\poly(\log n)$-round deterministic $\mathsf{LOCAL}$ for any of them, one has proven that $\mathsf{P}$-$\mathsf{SLOCAL} = \mathsf{P}$-$\mathsf{LOCAL}$. Example complete problems include network decompositions, splitting\cite{ghaffari2017complexity}, and certain locally verifiable versions of approximating dominating set or set cover\cite{ghaffari2018derandomizing}.

The work of \cite{ghaffari2017complexity} investigated \emph{a part} of the question of randomized vs. deterministic by examining $\mathsf{P}$-$\mathsf{SLOCAL}$ vs. $\mathsf{P}$-$\mathsf{LOCAL}$. A different work of Ghaffari et al.\cite{ghaffari2018derandomizing} showed that this actually \emph{fully} captures the issue, by proving that $\mathsf{P}$-$\mathsf{RLOCAL} = \mathsf{P}$-$\mathsf{SLOCAL}$. That is, any $\poly(\log n)$-round randomized $\mathsf{LOCAL}$ algorithm that solves a locally checkable problem with high probability can be derandomized into a deterministic $\mathsf{SLOCAL}$ algorithm for the same problem with locality $\poly(\log n)$. Hence, the question $\mathsf{P}$-$\mathsf{SLOCAL}$ vs. $\mathsf{P}$-$\mathsf{LOCAL}$ is equivalent to $\mathsf{P}$-$\mathsf{RLOCAL}$ vs. $\mathsf{P}$-$\mathsf{LOCAL}$ and the aforementioned problems are complete also with respect to $\mathsf{P}$-$\mathsf{RLOCAL}$. All of those problems admit $\poly(\log n)$-round randomized $\mathsf{LOCAL}$ algorithms, and any $\poly(\log n)$-round deterministic $\mathsf{LOCAL}$ algorithm for any of them would imply that $\mathsf{P}$-$\mathsf{RLOCAL} = \mathsf{P}$-$\mathsf{LOCAL}.$ Given what is known about these complete problems, and particularly network decomposition~\cite{panconesi-srinivasan}, the best known derandomization is that any $\poly(\log n)$-round randomized algorithm for \emph{any} locally checkable problem can be derandomized to a $2^{O(\sqrt{\log n})}$-round deterministic algorithm~\cite{ghaffari2018derandomizing}.

As a side remark, it is worth noting that the complexity-theoretic view mentioned above and some of the algorithmic and derandomization tools developed around it have already had concrete algorithmic applications: In particular, \cite{FischerGK17, ghaffari2018derandomizing} resolved a couple of the open questions regarding deterministic vs.\ randomized distributed algorithm for particular graph problems, including Open Problems 11.4, 11.5 and 11.10 of the book of Barenboim and Elkin\cite{barenboimelkin_book}. 

\subsection{Our Contribution}

As mentioned above, we try to shed more light on the $\mathsf{P}$-$\mathsf{RLOCAL}$ vs. $\mathsf{P}$-$\mathsf{LOCAL}$ problem by asking two questions about randomized distributed graph algorithms: (A) ``how much" randomness do they need for their efficiency, to solve in particular the above complete problems---e.g., network decompositions with $\poly(\log n)$ parameters---in $\poly(\log n)$-rounds? (B) And what kind of a bound can we prove on their error probability, given some limit on the time complexity. 
In each direction, we provide results that are in some sense the strongest that we one can achieve, barring a major breakthrough. More concretely, if one achieves a considerably stronger result, that would either prove $\mathsf{P}$-$\mathsf{SLOCAL} = \mathsf{P}$-$\mathsf{LOCAL}$ or at least provide a much faster deterministic algorithm for all problems in $\mathsf{P}$-$\mathsf{RLOCAL}$, including MIS, $(\Delta+1)$-coloring, and network decomposition with $\poly(\log n)$ parameters. 

Before diving to the answers, let us make something concrete. In much of the discussions in this paper, instead of talking about all possible randomized algorithms for all problems, we will focus directly on problems that are now known to be complete with respect to the $\mathsf{P}$-$\mathsf{RLOCAL}$ vs.\ $\mathsf{P}$-$\mathsf{LOCAL}$ question, in the sense mentioned above. In particular, much of our focus will be on network decompositions as introduced in \cite{awerbuch89}. A network decomposition of $G=(V, E)$ with $\alpha$ colors and diameter $\beta$---often abbreviated as an $(\alpha, \beta)$ network decomposition---is a partitionong of $V$ into $\alpha$ disjoint sets $V_1,\dots, V_\alpha$ such that for each $i\in\set{1,\dots,\alpha}$, each connected component of the induced subgraph $G[V_i]$ has diameter at most $\beta$ (see \Cref{sec:model} for a formal definition). In the following, when referring to a network decomposition with $\poly(\log n)$ parameters, we mean that $\alpha = \poly(\log n)$ and $\beta=\poly(\log n)$. It is known that every $n$-node network admits an $(O(\log n), O(\log n))$ network decomposition and that such a decomposition can be computed in $O(\log^2n)$ rounds of the \LOCAL model using randomized algorithms, with success probability $1-1/n$~\cite{linial93}. The best known deterministic algorithm to compute a network decomposition with $\poly(\log n)$ (or even weaker) parameters is exponentially slower and requires $2^{O(\sqrt{\log n})}$ rounds\cite{awerbuch96,panconesi-srinivasan}. We ask how much randomness is needed to compute network decompositions with $\poly(\log n)$ parameters (or some other related questions such as splitting) or what kind of a probabilistic guarantees can randomized algorithms provide in a given amount of time, for these problems. Given the aformentioned completeness results, we know that these concrete questions capture the role of randomness for all locally checkable problems: For instance, if we can construct a network decomposition with $\poly(\log n)$ parameters in $\poly(\log n)$ time using a certain ``amount of randomness'', then the same is true for any problem in $\mathsf{P}$-$\mathsf{RLOCAL}$ (using the same amount of randomness).
 
\subsubsection{Direction 1 --- How much randomness is needed?} 
We formalize the question about the amount of randomness, in three different ways: (A) the number of bits in each ``local neighborhood'', (B) the independence of the bits in different nodes, and (C) the number of bits shared in the whole network. We next discuss these cases, separately.

\smallskip
\noindent (\textbf{A}) First, we note that the standard definition for randomized algorithms allows each node to have unbounded amount of randomness. Usual algorithms need less than this and use only $\poly(\log n)$ bits per node~\cite{linial93}. We show in \Cref{thm:onebit-perh-hop1,thm:onebit-perh-hop2} that even much less than that suffices: even if for each node there is just one bit of randomness somewhere within its $\poly(\log n)$ hops, and these bits are independent of each other, then we can still compute all $\mathsf{P}$-$\mathsf{RLOCAL}$ problems in $\poly(\log n)$. We also give an algorithm that, in such a setting, builds a network decomposition with $\poly(\log n)$ parameters, in $\poly(\log n)$ rounds of the $\mathsf{CONGEST}$ model, with probability $1-1/\poly(n)$. Recall that the $\mathsf{CONGEST}$ model is a variant of the $\mathsf{LOCAL}$ model where the message sizes are limited to $O(\log n)$ bits, in contrast to the $\mathsf{LOCAL}$ model which allows unbounded message sizes.  

Two comments are in order. First, this one bit per $\poly(\log n)$-hop neighborhood is in some sense the least that we need to assume. Otherwise, there can be large $\poly(\log n)$-hop neighborhoods where there is no randomness, and hence $\poly(\log n)$-time algorithms that work in those areas are actually providing a deterministic algorithm. Second, our theoretical result might have some message for practical settings when thinking outside the realm of the worst-case analysis. Probably, in many networking settings, it is reasonable to assume that one can ``extract'' one bit of randomness out of various network properties (topology, identifiers, etc.) in each $\poly(\log n)$ neighborhood. In a very informal sense, our result implies that there are probably reasonable ways of building efficient ``deterministic'' algorithms (with only $1$ bit of pseudo-randomness per $\poly(\log n)$-hop neighborhood), such that breaking these algorithms requires very carefully built input networks\footnote{There is much to be investigated here. What kind of randomness extraction is possible, under different network assumptions? Also, can we formalize the sufficiency of such randomness against a computationally bounded adversary who builds the input graph, perhaps using some cryptographic assumptions?}.

\smallskip
\noindent (\textbf{B}) Second, we investigate the independence between the random bits. Standard algorithms assume the bits of different nodes to be fully independent. In \Cref{thm:NetDecomp-with-K-wise}, we show that $\poly(\log n)$-wise independence suffices for computing a network decomposition with $\poly(\log n)$ parameters in $\poly(\log n)$ rounds of the $\mathsf{LOCAL}$ model. Hence, any problem in $\mathsf{P}$-$\mathsf{RLOCAL}$ can be solved in $\poly(\log n)$ rounds, even if the random bits of different nodes are $\poly(\log n)$-wise independent.

\smallskip
\noindent (\textbf{C}) Third, as a result of (B), we can also bound the total number of random bits required in the whole network. Standard models of randomized algorithms implicitly assume a total of at least $\Omega(n)$ bits, i.e., at least one bit per node. We show that just $\poly(\log n)$ shared random bits suffice. For the local splitting problem shown to be complete in \cite{ghaffari2017complexity}, we prove in \Cref{lem:splitting} that just $O(\log n)$ bits of shared randomness suffices. This concretely shows that the $\mathsf{P}$-$\mathsf{RLOCAL}$ vs.\ $\mathsf{P}$-$\mathsf{LOCAL}$ question is very different from its well-known centralized $\mathsf{P}$ vs. $\mathsf{BPP}$ analog (polynomial-time  vs.\ bounded-error probabilistic polynomial time)\cite{sipser2006introduction, arora2009computational}. Note that for this centralized question, any probabilistic algorithm with $O(\log n)$ bits of randomness can be derandomized trivially in polynomial time, by checking all the $2^{O(\log n)} = n^{O(1)}$ possibilities for the $O(\log n)$ bits of randomness. Our result about splitting shows that even problems that are solvable efficiently with $O(\log n)$ bits of randomness can be hard with respect to $\mathsf{P}$-$\mathsf{RLOCAL}$ vs.\ $\mathsf{P}$-$\mathsf{LOCAL}$. Furthermore, we note that this bound of $O(\log n)$ bits is asymptotically the least that we can assume, unless we come up with a deterministic algorithm\footnote{An algorithm with $b< \log n$ bits of randomness is a uniformly random choice among $2^b< n$ possible deterministic algorithms. Each deterministic algorithm either succeeds or fails. Hence, the highest probability that is not $1$ would be at most $1-1/2^{b} < 1-1/n$. Thus, if the probability is at least $1-1/n$, it is equal to $1$.}.

Perhaps as more interesting end results, we also show that something similar can be said about more standard graph problems, for instance network decomposition. The result of the item (B) discussed above, combined with standard constructions for $k$-wise independent bits\cite{alon2004probabilistic}, shows that $\poly(\log n)$ bits of shared randomness build network decompositions with $\poly(\log n)$ parameters, in $\poly(\log n)$ rounds of the $\mathsf{LOCAL}$ model. We show such a result also for the much more stringent $\mathsf{CONGEST}$ model, using a very different method: In \Cref{thm:NetDecomp-with-Shared-Rand}, we show that we can build network decompositions with $\poly(\log n)$ parameters, in $\poly(\log n)$ rounds of the $\mathsf{CONGEST}$ model, using only $\poly(\log n)$ bits of shared randomness (and no private randomness). 


\subsubsection{Direction 2 --- Error probability vs Round Complexity}

Usually, the study of distributed graph algorithms has focused only on two regimes of error probabilities: an error probability of $0$, for deterministic algorithms, and an error probability of $1/\poly(n)$, for randomized algorithms. In a number of places where we need deterministic algorithms, the main property that we require from the algorithms is that the error probability is $0$ or close to $0$ and not how many random bits we use. Our second direction is to explore the trade-off between error probability and running time. Concretely, how small can we make the error probability in a certain time budget? Moreover, at what point is the error probability small enough so that we can derandomize the algorithm completely? 

For any $T\geq \log^{c} n$, where $c$ is a sufficiently large constant, we give a randomized algorithm that succeeds with probability at least $1-2^{-2^{\eps\log^2 T}}=1-n^{-2^{\eps\log^2 T}}$ for some constant $\eps>0$ and that computes a network decomposition with cluster diameter $T$ and $T$ cluster colors in $T$ rounds of the \CONGEST model (see \Cref{thm:ep_alg}). For instance, for $T=\polylog n$, this results in an error probability of $n^{-2^{\eps\log^2 \log n}}$, which is much stronger than the $n^{-\Theta(1)}$ error probability bound of standard randomized algorithms. In \Cref{thm:ep_derandomization}, we also show that the result of \Cref{thm:ep_alg} is nearly the best error probability that one can guarantee, unless we improve the deterministic complexity of network decompositions. Concretely, any randomized algorithm with round complexity $T$---even in the $\mathsf{LOCAL}$-model---that has success probability at least $1-2^{2^{\eps \log^\beta} T}$ for any constant $\eps>0$ and any constant $\beta>2$ would imply a deterministic network decomposition with $\poly(\log n)$ parameters in $2^{O(\log^{1/\beta} n)} \ll 2^{O(\sqrt{\log n})}$ rounds, thus significantly improving on the long-standing bounds of Panconesi and Srinivasan\cite{panconesi-srinivasan}. Similarly, in \Cref{thm:polylogderand}, we show that any randomized $\poly(\log n)$-round algorithm with success probability better than $1-2^{-2^{\log^\eps n}}$, for an arbitrary constant $\eps>0$, would imply a $\poly(\log n)$-round network decomposition with $\poly(\log n)$ parameters, thus proving that $\mathsf{P}$-$\mathsf{RLOCAL} = \mathsf{P}$-$\mathsf{LOCAL}.$ Previously, the best known such derandomization result was that any randomized algorithms with a much stronger success probability of $1-2^{-\Theta(n^2)}$ can be derandomized (via a union bound over all $2^{\Theta(n^2)}$ many possibilities for $n$-node graphs). This was implicit in Theorem 3 of \cite{chang16}.



\section{Model and Preliminaries}
\label{sec:model}

\noindent\textbf{Communication Model:} We work with two closely
related models of distributed computing, \LOCAL and \CONGEST: The
communication network is abstracted as an $n$-node graph $G=(V, E)$,
with one processor on each node $v\in V$ which has a unique
identifier. We typically assume that the identifiers are represented
by $\Theta(\log n)$ bits. Communication happens in synchronous rounds, where per round each node can send one message to each neighbor. In the \LOCAL model, message sizes can be unbounded. In the \CONGEST model, each message can have $O(\log n)$ bits. At the beginning, the processors/nodes do not know the topology of the network, except for potentially knowing some global parameters (as we shall discuss next). At the end, each processor should know its own part of the output, e.g., its color in the vertex coloring problem.
 
\paragraph{Uniform and Non-Uniform Algorithms:}
In most cases, we assume that the nodes of a distributed algorithm
initially know the number of nodes $n$ or an upper bound on $n$.x We
call such an algorithm a \emph{non-uniform} distributed algorithm and
we call an algorithm where the nodes initially do not know anything
about $n$ a \emph{uniform} algorithm. We formally model the knowledge
of $n$ as follows. In a non-uniform distributed algorithm, all nodes
are given $n$ as input. We say that a non-uniform algorithm $\calA$
solves a distributed graph problem $\calP$ in time $T(n)$ if $\calA$
solves $\calP$ in time at most $T(n)$ on all graph with at most $n$
nodes. If the correctness of a solution to a graph problem $\calP$
depends on the number of nodes $n$, we use the notation $\calP(n)$ to
make this clear. A solution to problem $\calP(n)$ that satisfies the
requirements for graphs with at most $n$ nodes. If we for example
compute an $(O(\log n), O(\log n))$-decomposition of a graph $G=(V,E)$
with at most $n$ nodes, the cluster diameter and number of cluster
colors can depend logarithmically on $n$ rather than just on the
actual number of nodes $|V|$. The error probability of a non-uniform
algorithm is defined as follows.

\begin{definition}[Error Probability]\label{def:errorprobability}
  We say that a (non-uniform) distributed algorithm $\calA$ solves a
  given distributed graph problem $\calP$ on $n$-node graphs with
  error probability $\delta(n)$ in time $T(n)$ if the following holds. When given $n$ as
  an input, $\calA$ computes a correct solution to $\calP$ with
  probability at least $1-\delta(n)$ on all graphs $G$ with at most $n$
  nodes.
\end{definition}

The definition implies that a randomized algorithm has to succeed with probability $1-\delta(n)$ even if the actual
graph has fewer than $n$ nodes. If $\delta(n)\leq 1/n^c$ for a constant
$c>1$ that can be chosen sufficiently large, we say that algorithm $\calA$ solves the problem $\calP$ with high
probability (w.h.p.).

\paragraph{Local Checkability:}
As mentioned in the introduction, we study distributed graph problems
where the validity of a solution is locally checkable
\cite{fraigniaud13}. Roughly, a graph problem $\calP$ is said to be
$d$-locally checkable if given a solution to $\calP$, there exists a
deterministic $d$-round algorithm $\calA_C$ in the \LOCAL model such
that every node outputs ``yes'' if and only if the given solution is a
valid solution to $\calP$. The class of problems we consider contains
the well-known class of locally checkable labeling (LCL)
problems\footnote{LCL problems are graph problems where the output of
  each node is a label from a constant-size alphabet and where the
  correctness of a solution can be checked with a constant-time \LOCAL
  algorithm \cite{naor95}.}, however, we use a much looser definition
of local checkability, where in particular the checking radius can
depend on $n$ and which is formally defined as follows.

\begin{definition}[Local Checkability]\label{def:localcheckability}
  Let $\calP$ be a distributed graph problem. We say that $\calP$ is
  \emph{$d(n)$-locally checkable} for a function $d(n)$ if there
  exists a deterministic non-uniform distributed algorithm $\calA_C$,
  which is given $n$ as input an which has the following properties.
  Given a graph $G=(V,E)$ of size $|V|\leq n$ and values $x_v$ for
  $v\in V$, $\calA_C$ has round complexity at most $d(n)$ and it
  checks whether $\set{x_v : v\in V}$ is a correct solution for
  $\calP$. After running $\calA_C$, each node outputs ``yes'' or
  ``no'' such that all nodes output ``yes'' if and only if
  $\set{x_v : v\in V}$ is a correct solution for $\calP$. We say that
  $\calP$ is \emph{strictly $d(n)$-locally checkable} if the round
  complexity is $d(|V|)$, i.e., if the checking radius only depends on
  the actual number of nodes and not on the upper bound $n$.
\end{definition}

Note that any LCL problem and more generally any problem that is
$d$-locally checkable for a constant $d\geq 0$ is also strictly
locally checkable.

\paragraph{Network Decomposition:} As discussed, the complexity of
computing a network decomposition is at the core of understanding the
role of randomization in local distributed graph algorithms: Given a
network decomposition with $\poly(\log n)$ parameters for a sufficiently
large (polylogarithmic) power $G^r$ of the network graph $G$, any randomized
$\poly(\log n)$-time algorithm for a $\poly(\log n$)-locally checkable problem can be
derandomized  to a deterministic $\poly(\log n)$-time  algorithm \cite{ghaffari2017complexity,ghaffari2018derandomizing}.
We slightly adapt the definition of a network decomposition from the one
introduced in \cite{awerbuch89,linial93} to make it more directly
useful in the \CONGEST model. Given a graph $G=(V,E)$, a strong
$(d(n), c(n))$-decomposition of $G$ is partition of $V$ into clusters
$C_1,\dots,C_p$ together with a subtree $T_i$ of $G$ and a color
$\gamma_i\in\set{1,\dots,c(n)}$ for each cluster $C_i$. The tree $T_i$
of cluster $C_i$ contains all nodes of $C_i$ (i.e., $T_i$ spans the
cluster $C_i$). Each tree $T_i$ has diameter  at most
$d(n)$ (which implies that each cluster has weak diameter at most
$d(n)$) and the colors of the clusters are chosen such that clusters
that are connected by an edge of $G$ are assigned different colors. We
say that a decomposition has \emph{congestion $\kappa\geq 1$} if each node is
containted in at most $\kappa$ clusters of each color.\footnote{For
  efficient usage in \CONGEST, one could require
  that each edge is only used by a few clusters per color.}
Usually, in the literature, the trees spanning the clusters are not
given as part of the definition and instead of explicitly specifying
the congestion of a decomposition, the literature distinguishes
between strong and weak diameter decompositions. The decompositions of
the above definition have weak diameter $d(n)$. In a strong diameter
decomposition, the tree $T_i$ of each cluster $C_i$ consists exactly
of the nodes in $C_i$ such that each node participates only in the
tree of its cluster. A strong diameter decomposition is therefore a
special case of a decompositions with congestion $1$.


\section{Viewing Randomness as a Scarce Resource}
\label{sec:randomness}
In this section, we view randomness as a resource and use this perspective to interpolate between standard randomized algorithms and deterministic algorithms. Deterministic distributed algorithms do not use any randomness. Randomized distributed algorithms, in the standard definitions, can use an unbounded number of bits of randomness in each node of the network, where all the bits (in the same node and also across the whole network) are assumed to be independent of each other. As mentioned before, for many of the classic problems in distributed graph algorithms, the known randomized algorithms are considerably more time-efficient compared to their deterministic counterparts. Our goal in this section is to investigate this gap, by viewing the bits of randomness as a scarce resource and asking ``how much" of it is really needed for ``efficiency". We next discuss how we make these two phrases of how much and efficiency more concrete.

Regarding ``efficiency", our concrete objective is to be able to solve the classic local problems in distributed graph algorithms (e.g., network decompositions, maximal independent set, and $\Delta+1$ coloring) in time polylogarithmic in $n$, given the limited randomness that we have. Recall that this polylogarithmic time is usually construed as a first-order definition of efficiency~\cite{barenboim12_decomp} and it is achievable using randomized algorithms with unbounded randomness per node. Regarding ``how much" randomness, we formalize the question and study it in three different ways:

\begin{itemize}
\item[(\textbf{A})] What if instead of each node having access to its own (unbounded) source of random bits, we only have some few bits of randomness in the whole network but such that each node can reach at least one of them? Concretely, we assume that some nodes $S\subseteq V$ of the network hold some bits of randomness (which are independent of each other), each holding just a single bit, and for each node $v\in V$, there is at least one node $s\in S$ within distance $h$ hops of $v$. In order so that just accessing this bit of randomness is time-efficient, we will assume that $h=\poly(\log n)$. 

\item[(\textbf{B})] What if the bits of the randomness in the network are correlated and have only some limited independence, e.g., they that $k$-wise independent for $k=\poly(\log n)$?

\item[(\textbf{C})] What if instead of each node having its own independent bits, which amounts to $\Omega(n)$ bits in the whole network, we have only $k$ bits of global shared randomness? 
\end{itemize} 
We discuss these directions separately, (A) in \Cref{subsec:BitGathering}  and (B) and (C) in \Cref{subsec:sharedRand}.

\subsection{One Bit of Private Randomness Per {\boldmath$\poly(\log n)$} Hops}
\label{subsec:BitGathering}
In this subsection, we work under the assumption that there is one bit of randomness within $\poly(\log n)$ distance of each node, and we show that this amount suffices for building network decompositions with $\poly(\log n)$ colors and radius. Therefore, this amount of randomness suffices to have a $\poly(\log n)$-round $\mathsf{LOCAL}$ model algorithm for any (locally checkable) problem that can be solved in $\poly(\log n)$ rounds of the $\mathsf{LOCAL}$ model using unbounded randomness. We provide a construction of network decomposition in the $\mathsf{CONGEST}$ model, with the hope that this construction can be useful also in settings where we care about having small messages. 

\begin{theorem}\label{thm:onebit-perh-hop1}
Suppose that nodes $S\subseteq V$ of the network hold some independent bits of randomness, each holding just a single bit, and for each node $v\in V$, there is at least one node $s\in S$ within distance $h$ hops of $v$, where $h=\poly(\log n)$. Then, there is a $\poly(\log n)$-time distributed algorithm in the $\mathsf{CONGEST}$ model that, using only these bits as its source of randomness, constructs a $(O(\log n), h\poly(\log n))$-network decomposition of the the graph with congestion $1$.
\end{theorem}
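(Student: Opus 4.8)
The plan is to \emph{gather} enough independent random bits into the neighborhood of every node (this is the ``bit gathering'' in the title of the subsection), and then feed that randomness into a Linial--Saks/Miller--Peng--Xu style ball-carving construction of a network decomposition, with all distances rescaled by a factor of $\Theta(h)$.

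\textbf{Step 1 (Bit gathering).} Fix $\rho = \Theta(h\log n)$. If some node's $\rho$-hop neighborhood already spans all of $V$, then $\mathrm{diam}(G)=O(\rho)$, a single BFS tree from the minimum-\ID{} node is a $(O(\rho),1)$-decomposition with congestion $1$, and no randomness is needed; this is detected and executed in $O(\rho)=h\poly\log n$ rounds. Otherwise every node $v$ has a shortest path $p_0=v,p_1,\dots,p_\rho$ emanating from it, and the nodes of $S$ covering $p_0,p_{2h+1},p_{2(2h+1)},\dots$ are pairwise distinct (two nodes of $S$ within $h$ of points more than $2h$ apart on a shortest path would contradict $d(p_i,p_j)\ge|i-j|$) and lie within distance $\rho+h$ of $v$. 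Hence, after flooding for $\rho$ rounds and collecting the $(\ID,\text{bit})$ pairs it sees, every node holds $\Omega(\log n)$ truly independent bits; enlarging the constant in $\rho$ makes this $\ge K\log n$ for any desired constant $K$. The only blemish is that the bit-collections of two nearby nodes overlap, hence are correlated.

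\textbf{Step 2 (Rescaled ball carving).} I would then run $\ell=\Theta(\log n)$ phases, in each phase operating on the subgraph $G_i$ of currently unclustered nodes and using the nodes of $S$ as the potential cluster centers, each center's decision depending only on its own bit (so the coins driving one phase stay genuinely independent). In a phase, each active center $s$ grows a BFS ball of radius $\Theta(\rho)=h\poly\log n$, every still-unclustered node $u$ attaches to the active center minimizing $(\mathrm{dist}(u,\cdot),\ID(\cdot))$ among those reaching it, and records the corresponding BFS parent. This Voronoi-type rule is \emph{geodesically closed}, so each formed cluster is connected with strong diameter $O(\rho)=h\poly\log n$, and the recorded trees of one phase are node-disjoint, which is exactly what gives congestion $1$; shrinking each cluster by its one-hop boundary (returning boundary nodes to the pool) makes the clusters of one phase pairwise non-adjacent, so they may all take the phase index as their color, for a total of $\ell=O(\log n)$ colors. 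Since $u$ has $\ge K\log n$ independently-seeded nodes of $S$ within the claiming radius, with probability $1-n^{-\Omega(K)}$ one of them clusters $u$ permanently, and a union bound over the $\le n$ nodes finishes correctness \whp. Every primitive---BFS to depth $h\poly\log n$, comparisons of $(\mathrm{dist},\ID)$ pairs, and forwarding single bits---uses $O(\log n)$-bit messages, so the algorithm runs in $\ell\cdot h\poly\log n=h\poly\log n$ rounds of \CONGEST.

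\textbf{Main obstacle.} The textbook analyses of ball-carving decompositions assume fully independent per-node randomness, which we do not have; the way out is to make the centers the nodes of $S$ and the claiming radius $\Theta(\rho)$ large enough that each single node's ``success'' event involves only $\poly\log n$ centers, all of which are distinct $S$-nodes and hence independently seeded, after which the global statement needs no independence across nodes. The genuinely delicate bookkeeping lives here: each node of $S$ carries only one bit, so it can feed randomness into essentially one phase, and one must arrange the $\ell=\Theta(\log n)$ phases (e.g.\ assigning each $S$-node a phase from the $\poly\log n$ bits available) so that every node still gets a fresh, independent chance in enough phases; one must also ensure that the per-phase boundary loss from shrinking, and the growth of distances when clustered nodes are deleted, are kept under control so that the $\Theta(\log n)$ phases indeed cluster everyone. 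These are the points where the argument requires care, whereas the diameter, color-count, and congestion claims follow from the Voronoi/geodesic-closedness structure above.
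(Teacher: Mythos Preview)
Your high-level plan---gather bits, then run a ball-carving decomposition---matches the paper's, but Step~2 has a real gap that you yourself flag and do not close. The difficulty is that each $S$-node holds exactly one bit, yet a ball-carving decomposition needs $\Theta(\log n)$ phases, and in each phase the centers need fresh independent randomness (either one bit per growth step in Linial--Saks style, or $\Theta(\log n)$ bits for a geometric radius in MPX/Elkin--Neiman style). Your suggestion that ``each center's decision depends only on its own bit'' gives independent coins within a \emph{single} phase but leaves nothing for subsequent phases; your fallback of ``assigning each $S$-node a phase from the $\poly\log n$ bits available'' either (i) uses the bits gathered by flooding, which are shared among nearby $S$-nodes and hence reintroduce exactly the correlation you were trying to avoid, or (ii) uses a deterministic partition of $S$ into $\Theta(\log n)$ groups, for which you would need every node to have $\Omega(\log n)$ members of \emph{each} group nearby---something you have not established and which does not follow from the shortest-path counting in Step~1. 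Separately, the sentence ``with probability $1-n^{-\Omega(K)}$ one of them clusters $u$ permanently'' is not justified: after boundary shrinking, whether $u$ survives depends on whether it lies in the interior of its Voronoi cell, and for a node roughly equidistant from two $S$-centers this fails with constant probability regardless of how many centers are nearby.

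The paper's fix is to insert a deterministic aggregation step \emph{before} any randomness is used: compute a $(\Theta(kh),\Theta(kh\log n))$-ruling set $R$ and form Voronoi clusters around it. Any non-isolated cluster contains a shortest path of length $\Theta(kh)$ toward a neighboring cluster, along which (by your own Step~1 argument) sit $\Omega(k)$ pairwise-distinct $S$-nodes, all inside the cluster. Upcasting their bits to the cluster center gives each center $\Theta(\log^2 n)$ \emph{fully independent} bits---independent both within a center and across centers, since distinct clusters use disjoint $S$-nodes. Now run the Elkin--Neiman decomposition on the \emph{contracted cluster graph}: each of the $\Theta(\log n)$ phases draws one geometric radius per center using $O(\log n)$ bits, which the center actually has, and the analysis is the standard one with no correlation issues. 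The missing idea, in short, is to first coarsen the graph deterministically so that the units carrying randomness are rich enough to drive all phases, rather than trying to stretch single bits across phases.
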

\begin{proof}
  The proof of \Cref{thm:onebit-perh-hop1} consists of two parts, which we next present as \Cref{lem:BitGathering,lem:CONGEST-MPX-BallCarving}. In \Cref{lem:BitGathering}, we use a certain ruling set construction to cluster nodes into low-diameter clusters, each each cluster has some $\poly(\log n)$ bits of randomness, unless the cluster is a connected component on its own. Then, in \Cref{lem:CONGEST-MPX-BallCarving}, we use this randomness to build the desired network decomposition. \Cref{thm:onebit-perh-hop1} directly follows from the statements of \Cref{lem:BitGathering,lem:CONGEST-MPX-BallCarving}.
\end{proof}

Note that an undesirable property of \Cref{thm:onebit-perh-hop1} is that $h$ appears in the diameter of the network decomposition. In \Cref{thm:onebit-perh-hop2}, we explain how to fix that issue, using some of the methods that we will discuss in the next subsection. Before stating the two lemmas to prove \Cref{thm:onebit-perh-hop1}, we first recall the notion of ruling sets and the known deterministic algorithms for them.

\paragraph{Ruling Sets:} We need the notion of ruling sets as
introduced in \cite{awerbuch89} in some of our algorithms. Given a
graph $G=(V,E)$, a subset $U\subseteq V$ of the nodes of $G$, and two
parameters $\alpha,\beta \geq 1$, a $(\alpha,\beta)$-ruling set of $G$
w.r.t.\ $U$ is a subset $S\subseteq U$ of the nodes in $U$ such that
for all $x,y\in S$, $d_G(x,y)\geq \alpha$ and for all $x\in U$, there
exists a $y\in S$ such that $d_G(x,y)\leq \beta$. For any
$\alpha\geq 2$, a $(\alpha,\alpha\log n)$-ruling set of $G$ w.r.t.\
$S$ can be computed deterministically in time $O(\alpha\cdot\log n)$
in time $O(\alpha\cdot \log n)$ in the \CONGEST model
\cite{awerbuch89,CONGEST_rulingsets}.

Now, we are ready to provide the lemmas that prove \Cref{thm:onebit-perh-hop1}.

\begin{lemma}\label[lemma]{lem:BitGathering} Suppose that nodes $S\subseteq V$ of the network hold some independent bits of randomness, each holding just a single bit, and for each node $v\in V$, there is at least one node $s\in S$ within distance $h$ hops of $v$, where $h=\poly(\log n)$. Then, there is an $O(hk\log n)$-round deterministic algorithm in the $\mathsf{CONGEST}$ model that partitions the nodes into disjoint clusters, each inducing a connected subgraph with diameter $O(kh\log n)$, with the following property: Each cluster is either (A) isolated --- meaning that it has no neighboring cluster --- or (B) its center holds $k$ bits of randomness, which are independent of each other and independent of the bits held by other cluster centers.
\end{lemma}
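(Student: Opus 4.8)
The goal is to take a configuration where each node has a single random bit within distance $h$, and produce a clustering into connected low-diameter clusters such that each non-isolated cluster has a center holding $k$ fresh, mutually independent bits. The natural strategy is a two-level construction. First, I would aim to gather the scattered single bits into "local pools": the trouble is that two nearby nodes $v, v'$ could have their nearest random-bit-holder be the same node $s \in S$, so I cannot just let each node grab its nearest bit — that would destroy independence. So the real task is to carve the graph into regions each of which (a) contains at least $k$ distinct nodes of $S$, and (b) has a designated center, so that distinct regions own disjoint subsets of $S$ and hence independent bits.

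**Using ruling sets to build the regions.** The plan is to invoke the deterministic $(\alpha, \alpha \log n)$-ruling set algorithm recalled just before the lemma. I would pick $\alpha$ on the order of $\Theta(kh)$ and compute a $(\alpha, \alpha\log n)$-ruling set $W$ of $G$ with respect to $V$ (or with respect to $S$, then extend). Because any two ruling-set centers are at distance $\ge \alpha \gtrsim kh$, the balls of radius $\sim \alpha/2$ around them are vertex-disjoint; and because the ruling set is $\alpha\log n$-dominating, every node lies within $O(\alpha \log n) = O(kh\log n)$ of some center. Assigning each node to its nearest ruling-set center (breaking ties by ID) yields a Voronoi-style partition into connected clusters of diameter $O(kh\log n)$, which matches the claimed diameter bound. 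Within each cluster $C$ with center $w$, since the cluster contains a ball of radius $\Omega(kh)$ around $w$ (or is a whole small component), and since every node of that ball has one of the $|S|$-nodes within $h$ hops, a ball of radius $\Omega(kh)$ around $w$ meets at least $k$ distinct nodes of $S$ — one can argue this by a packing/covering count: the $h$-balls around the $S$-nodes cover the cluster, and a radius-$\Theta(kh)$ ball cannot be covered by fewer than $k$ of them. The center then "adopts" $k$ arbitrary such $S$-nodes (chosen canonically, e.g. $k$ closest, ties by ID) as the source of its $k$ bits.

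**Handling isolated clusters and disjointness of ownership.** If a cluster is an entire connected component of $G$ that happens to be small (fewer than the required $\Omega(kh)$ radius, or simply containing fewer than $k$ nodes of $S$), then that component has no neighboring cluster and falls into case (A): isolated. For all other clusters, I need to certify that the $S$-nodes adopted by different centers are genuinely disjoint, which gives independence of the bit-tuples since the original bits are independent. This follows because each $S$-node is adopted only by the unique center of the cluster it belongs to (a node is in exactly one cluster, and adoption happens cluster-internally); so no $S$-node is double-counted, and hence distinct centers own disjoint subsets of the independent bits, making their $k$-bit tuples mutually independent. Finally, $k$ bits can be physically transported to the center: each adopted $S$-node forwards its bit along the BFS tree of the cluster toward $w$, which takes $O(kh\log n)$ rounds in CONGEST since the cluster diameter is $O(kh\log n)$ and at most $k$ bits traverse any edge (one can pipeline the $k$ one-bit messages), yielding the stated $O(hk\log n)$ round complexity.

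**Main obstacle.** The delicate point is the covering argument that every sufficiently large cluster contains at least $k$ \emph{distinct} nodes of $S$ within a controlled radius of its center, together with making the adoption rule canonical enough that ownership is provably disjoint across clusters. A naive "nearest random bit" rule fails exactly because of shared ownership; the fix is to let the $S$-nodes be claimed by the cluster \emph{containing} them, not by the nodes that are near them, and to choose the radius of the ruling set large enough ($\Omega(kh)$, not just $\Omega(h)$) that each cluster interior provably absorbs $k$ full $h$-balls' worth of distinct $S$-nodes. I also need to be a little careful that the CONGEST implementation of the ruling-set algorithm and the subsequent BFS/aggregation all fit in the claimed $O(hk\log n)$ bound, but that is routine once the combinatorial structure is in place.
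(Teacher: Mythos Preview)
Your high-level approach is the paper's: compute a $(\Theta(kh),\Theta(kh\log n))$-ruling set, form Voronoi clusters around its centers, and argue that each non-isolated cluster contains at least $k$ distinct nodes of $S$ whose bits can be pipelined to the center.

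There is one genuine gap. Your key step is the claim ``a radius-$\Theta(kh)$ ball cannot be covered by fewer than $k$ radius-$h$ balls.'' As stated, this is false in general graphs: take a graph with a universal vertex $c$; then the single $h$-ball $B(c,h)$ (even $h=1$) already covers any ball of any radius. So a pure packing/covering count on balls does not give the lower bound you want. What rescues the argument --- and what the paper makes explicit --- is that a \emph{non-isolated} cluster centered at $w$ comes with a shortest path of length at least $\alpha=\Theta(kh)$ from $w$ to a neighboring ruling-set center, and the first $\Theta(kh)$ vertices of that path lie inside the cluster. On a shortest path, any $h$-ball $B(x,h)$ can contain at most $2h+1$ consecutive path vertices (triangle inequality: if $p_i,p_j\in B(x,h)$ then $|i-j|=d(p_i,p_j)\le 2h$), so covering those $\Theta(kh)$ path vertices forces $\Omega(k)$ distinct $S$-balls and hence $\Omega(k)$ distinct $S$-nodes. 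The paper phrases this by picking $k$ path vertices at pairwise distance $\ge 3h$; each has an $S$-witness within $h$, and the spacing guarantees these witnesses are distinct. It also checks (via the $\alpha$-separation of ruling-set centers) that the $h$-neighborhoods of those path vertices, and in particular the $S$-witnesses, lie \emph{inside} the cluster --- a point your sketch asserts (``adoption happens cluster-internally'') but does not justify.

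Once you replace the ball-covering claim by this shortest-path argument, the rest of your plan (Voronoi assignment, disjoint adoption of $S$-nodes per cluster, pipelining $k$ bits up the BFS tree) matches the paper and gives the stated $O(hk\log n)$ \CONGEST bound.
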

\begin{proof}[\textbf{Proof of \Cref{lem:BitGathering}}]
First, we compute a certain $(h', h'\log n)$-ruling sets $R$, for $h'=10kh$. That is, any two nodes of $R$ have distance at least $h'$ from each other and moreover, for each node $v\in V$, there is at least one node of $R$ within its distance $h'\log n$. This can be computed directly in the $\mathsf{CONGEST}$ using the ruling set algorithm of Awerbuch et al.\cite{awerbuch89}, as we remarked in \Cref{sec:model}. Now, define clusters in the graph, one centered at each node in $R$, where each node $v\in V$ joins the cluster of the nearest $R$ node. Notice that this is doable in $h'\log n$ rounds, using a simple flooding of the name of nodes in $R$, where only the first name is propagated. Then, each node $v\in V$ knows its cluster center in $R$. Let us focus on one cluster $\mathcal{C}$ centered at node $r\in R$. There are two possibilities:

In the easier case of \emph{singularity} where the cluster has no neighboring cluster, we leave these clusters on their own; these satisfy property (A) in the lemma statement, and later, when it comes to building a network decomposition, we will color these clusters easily with color $1$, as one of the clusters of our network decomposition. 

In the less trivial case, suppose that $\mathcal{C}$ has at least one neighboring cluster $\mathcal{C}'$ centered at $r'\in R$. Then, on the shortest path $\mathcal{P}$ connecting $r$ to $r'$, the first $h'/3$ nodes $F\subset \mathcal{P}$ belong to $\mathcal{C}$ (they cannot belong to any other cluster, as that would be in contradiction with $R$ being $h'$ independent). In fact, for each node $w\in F$, even the whole $h'/6$-hop neighborhood of $w$ belongs to $\mathcal{C}$, for the same reason. Now, we can choose $h'/(10h)> k$ nodes  $F'\subset F$ such that any two nodes of $F'$ have distance at least $3h$ from each other. For each node $w\in F'$, there is some node $s_w\in S$ within $h$ hops of $w$ that holds a random bit. Moreover, these source nods $s_w$ are distinct for any different nodes $w, w' \in F'$, because any two nodes of $F'$ have distance at least $3h$. Since $h'/6$-hop neighborhood of any node in $F$ is in $\mathcal{C}$, and given that $h\gg h'/6$, all nodes $s_w$ for all $w\in F'$ are also in $\mathcal{C}$. Therefore, cluster $\mathcal{C}$ contains at least $h'/(10h) > k$ bits of randomness. We can propagate this amount of randomness to the cluster center $r$, by a simple upcast on the tree connecting $r$ to the nodes of $\mathcal{C}$, in $O(h'\log n+k) = O(h'\log n) = O(kh\log n)$ rounds.
\end{proof}

\begin{lemma}
\label{lem:CONGEST-MPX-BallCarving}Suppose that we are given a partitioning of the nodes into disjoint clusters, each inducing a connected subgraph with diameter $O(h\log^3 n)$, with the following property: Each cluster is either (A) isolated --- meaning that it has no neighboring cluster --- or (B) its center holds $C \log^2 n$ bits of randomness, which are independent of each other and independent of the bits held by other cluster centers. Then, there is a $h\cdot\poly(\log n)$-time distributed algorithm in the $\mathsf{CONGEST}$ model that, using only these bits as its source of randomness, constructs a $(O(\log n), h\poly(\log n))$-network decomposition of the the graph with congestion $1$.
\end{lemma}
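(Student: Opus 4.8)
The plan is to regard the given clusters as the vertices of a \emph{cluster graph} $\mathcal{G}$, with an edge between two clusters whenever some edge of $G$ joins them, and then to run a \CONGEST simulation of the classical randomized network decomposition of Linial and Saks~\cite{linial93} on $\mathcal{G}$, using the $C\log^2 n$ bits stored at each cluster center as the only source of randomness. The isolated clusters are handled first and trivially: such a cluster induces a connected subgraph and has no neighboring cluster, hence it is an entire connected component of $G$; we output it as its own cluster, with color $1$ and spanning tree the BFS tree it already carries (of diameter $O(h\log^3 n)$). Being whole components, these clusters are pairwise non-adjacent and non-adjacent to every other output cluster, so they never create a coloring conflict. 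From now on only the non-isolated clusters are considered, and each of them holds its $C\log^2 n$ random bits, so \emph{every} vertex of $\mathcal{G}$ is random.

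On $\mathcal{G}$ we run $L=O(\log n)$ ball-carving phases, producing colors $2,\dots,L+1$; together with color $1$ this gives $O(\log n)$ colors. Up front, each cluster center $c$ uses its random string to sample, independently for each phase $j\le L$, one truncated-geometric radius $r_c^{(j)}\in\{0,\dots,R\}$ with $R=\Theta(\log n)$; each sample uses $O(\log n)$ coin flips, so $O(\log^2 n)$ bits suffice overall, matching the available $C\log^2 n$ bits for a sufficiently large constant $C$. In phase $j$, among the clusters not yet assigned, each center $c$ grows a ball of radius $r_c^{(j)}$ in $\mathcal{G}$; a still-active cluster is assigned to the center maximizing $r_c^{(j)}-d_{\mathcal{G}}(c,\cdot)$ (ties broken by $\ID(c)$), provided it lies strictly inside that ball; the resulting clusters receive color $j+1$ and are removed. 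By the standard analysis of~\cite{linial93}, with probability $1-1/\poly(n)$ (a union bound over the at most $n$ clusters) every sampled radius is $O(\log n)$ and all clusters are assigned within $L=O(\log n)$ phases; moreover the clusters of any fixed color are pairwise non-adjacent in $\mathcal{G}$, each one is connected, and each admits a BFS tree in $\mathcal{G}$ of depth $O(\log n)$ rooted at its center.

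Two things remain: implementing a step of $\mathcal{G}$ in \CONGEST, and translating the guarantees from $\mathcal{G}$ to $G$. For the first, the key observation is that in the ball-carving each active cluster ever only needs to disseminate a \emph{single} $O(\log n)$-bit value --- the lexicographically best pair $(\,r_c^{(j)}-d_{\mathcal{G}}(c,\cdot),\ \ID(c)\,)$ it currently holds --- to all neighboring clusters, and then recompute the lex-best value it receives. One such ``cluster round'' is realized in $O(h\log^3 n)$ \CONGEST rounds by flooding that value inside each cluster along its spanning tree, forwarding it once across the cluster-boundary edges of $G$, and aggregating the lex-minimum back inside each cluster; since only one value per cluster is ever in transit, the $O(\log n)$-bit bandwidth is never exceeded. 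Phase $j$ needs $O(r_c^{(j)})=O(\log n)$ cluster rounds and there are $O(\log n)$ phases, so the algorithm runs in $O(h\log^5 n)=h\cdot\poly(\log n)$ \CONGEST rounds, plus $O(h\log^3 n)$ rounds to build the clusters' BFS trees. For the second, since ``active cluster $\mapsto$ winning center'' is a function, each original cluster ends up in exactly one output cluster; hence the output clusters partition $V$ and the decomposition has congestion $1$. Each output cluster is a union of whole original clusters forming a ball of radius $O(\log n)$ in $\mathcal{G}$, and its spanning tree $T_i$ is obtained by taking the BFS tree of that ball in $\mathcal{G}$, replacing each $\mathcal{G}$-edge by one connecting $G$-edge, and inside each original cluster using that cluster's own BFS tree; then $T_i$ uses only nodes of its output cluster and has diameter $O(\log n)\cdot O(h\log^3 n)=O(h\log^4 n)=h\cdot\poly(\log n)$. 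Together with the color-$1$ isolated clusters, this yields an $(O(\log n),\,h\,\poly(\log n))$ network decomposition of $G$ with congestion $1$.

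\textbf{Main obstacle.} None of this is deep, but the step that needs genuine care is the \CONGEST simulation: one must argue that, per cluster round, a cluster has to broadcast only a single $O(\log n)$-bit value, so that the congestion both inside and between clusters stays $O(1)$ and the per-round overhead is exactly the cluster diameter $O(h\log^3 n)$. This is precisely what keeps the total round complexity at $h\cdot\poly(\log n)$ despite the random bits residing only at the cluster centers rather than at every node.
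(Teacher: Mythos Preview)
Your high-level approach matches the paper's almost exactly: handle isolated clusters trivially with color~$1$, form the cluster graph $\mathcal G$, run a randomized ball-carving decomposition on $\mathcal G$ using the $O(\log^2 n)$ bits at each center, and simulate one $\mathcal G$-round in $O(h\log^3 n)$ \CONGEST rounds by propagating a single aggregated $O(\log n)$-bit value through each cluster's tree. The paper does precisely this, invoking Elkin--Neiman~\cite{elkin16_decomp} (the MPX-style geometric clocks) rather than Linial--Saks, and makes the same observation that only an aggregate needs to reach each center.

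There is, however, a genuine gap in your assignment rule. You let each active cluster join the center $c$ maximizing $r_c^{(j)} - d_{\mathcal G}(c,\cdot)$ (ties by $\ID$), provided it is strictly inside that ball, and then assert that ``by the standard analysis of~\cite{linial93}\dots the clusters of any fixed color are pairwise non-adjacent.'' That conclusion does not follow from the rule you wrote down. Take two centers $c_1,c_2$ with $r_{c_1}=r_{c_2}=5$ and adjacent vertices $u,u'$ of $\mathcal G$ with $d(c_1,u)=3,\ d(c_2,u)=2$ and $d(c_1,u')=2,\ d(c_2,u')=3$. Then $u$ has maximum slack at $c_2$ and $u'$ at $c_1$; both are strictly inside their winning balls, so both get colored in this phase, yet the two output clusters are adjacent and share a color. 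The Linial--Saks analysis you cite avoids this only because balls are carved \emph{sequentially}; in the parallel version you actually implement (and need for \CONGEST), that guarantee is lost.

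The paper's remedy, following~\cite{elkin16_decomp}, is to compute the top \emph{two} slacks $m_1\ge m_2$ at each cluster and color it in the current phase only if $m_1-m_2>1$; otherwise the cluster is deferred. This is exactly what forces two adjacent clusters colored in the same phase to share a center (\cite[Lemma~4]{elkin16_decomp}), and it preserves the ``constant probability of being colored per phase'' bound (\cite[Claim~6]{elkin16_decomp}). Your \CONGEST simulation then has to carry the top two $(\text{slack},\ID)$ pairs rather than only the lex-best one --- still $O(\log n)$ bits, so the round-complexity accounting you gave is unaffected.
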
  

\begin{proof}[\textbf{Proof of \Cref{lem:CONGEST-MPX-BallCarving}}]
On a high-level, our network decomposition is obtained by running the randomized algorithm of Elkin and Neiman\cite{elkin16_decomp}, itself inspired by Blelloch et al.\cite{blelloch14} and Miller et al.\cite{miller2013parallel},  on top a logical network where we virtually contract each cluster to be a single node. This logical \emph{cluster graph} $\mathcal{CG}$ is obtained by viewing each cluster as one node and connecting two clusters if they include nodes that are adjacent, in the base graph. Each round of communication between two neighboring clusters can be performed in $O(h'\log n)$ rounds, given that the centers are within $O(h'\log n)$ hops of each other. One has to be careful with one subtlety: we cannot simulate a full-fledged $\mathsf{CONGEST}$ model on this cluster graph, e.g., if a cluster has to receive many different messages from many different neighboring clusters, we might not be able to deliver all of these messages to the center of the cluster, in time proportional to the cluster radius. However, in order to run the construction of Elkin and Neiman\cite{elkin16_decomp}, it will suffice for us to deliver an aggregate function of the messages sent by neighboring clusters to the cluster center, e.g., the minimum value in these messages. 

\paragraph{Construction:} We next give a brief overview of the algorithm of \cite{elkin16_decomp}. Our description is phrased as running on top of our clusters. The construction has $10\log n$ phases, where in each phase $i$ we gradually color some non-adjacent set of the clusters with color $i$ and remove them from the graph. Let each center $v$ of a cluster $\mathcal{C}$ pick a random value $r_v$ from a geometric distribution\footnote{Elkin and Neiman\cite{elkin16_decomp} wrote their description by choosing random variables $r_v$ from an exponential distribution, which can assume continuous values. We would like to explicitly talk about the number of random bits and for that, the geometric distribution is more convenient. The arguments of Elkin and Neiman\cite{elkin16_decomp} about the exponential distributed extends to its discrete analog, the geometric distribution; the core property of being a memoryless distribution which holds for both distributions.}, where $Pr[r_v=k] = 2^{-k}$. Then, each cluster $\mathcal{C}'$ centered at node $u$ considers the maximum two clusters according to the measure $r_v-dist_{\mathcal{CG}}(v, u)$, where $dist_{\mathcal{CG}}(v, u)$ is the distance between the clusters centered at $v$ and $u$ in the cluster graph $\mathcal{CG}$. Let $m_1$ and $m_2$ be the two maximum measures. If $m_1-m_2 > 1$, then the cluster $\mathcal{C}'$ centered at $u$ gets colored in this phase with color $i$. If $m_1-m_2 \in\{0,1\}$, then cluster $\mathcal{C}'$ remains for the next phase. Each phase runs in $O(\log n)$ rounds on top of the cluster graph $\mathcal{CG}$. This can be performed in $O(h\log^2 n)$ rounds on the base graph $G$, because each cluster $\mathcal{C}'$ centered at a node $u$ needs to pass to each of its neighbors only the top two cluster names $\mathcal{C}$ centered at node $v$ and radii $r_v$ according to measure $r_v-dist_{\mathcal{CG}}(v, u)$. Over all the $10\log n$ phases, this translates to a round complexity of $O(h\log^3 n)$.

\paragraph{Randomness:} We argue that the randomness that we have in clusters suffices for the construction. To choose the random value $r_v$ for each phase, having $10\log n$ bits suffices, with high probability: To generate $r_v$, think about the process of flipping coins one by one until the first tail coin shows up. The iteration of the first tail is the value $r_v$. With probability $1-1/n^{10}$, we toss at most $10\log n$ coins, before the first tail. Hence, to run all the $10\log n$ phases, $100\log^2 n$ bits suffice\footnote{In fact, $O(\log n)$ bits suffice for all phases as the number of coins until one sees $\Theta(\log n)$ tails is, w.h.p., $O(\log^2 n)$.}.

\paragraph{Properties of network decomposition:} We give only a sketch; the proof details can be found in \cite{elkin16_decomp}. First, as another corollary of the above, we also see that with high probability, we have $\max \{r_v\} = O(\log n)$. Thus, each cluster $\mathcal{C}'$ that gets color $i$ is at most $O(\log n)$ hops away from its central cluster $\mathcal{C}$, in the cluster graph $\mathcal{CG}$. Moreover, as argued in \cite[Lemma 4]{elkin16_decomp}, due the way of tie breaking by measure  $r_v-dist_{\mathcal{CG}}(v, u)$, we can see that (I) any two neighboring clusters that get color $i$ must have the same central cluster $\mathcal{C}$, and (2) all clusters on the shortest path in $\mathcal{CG}$ from $\mathcal{C}$ to $\mathcal{C}'$ are also colored with color $i$. Hence, they induce a connected subgraph with radius $O(\log n)$ in the cluster graph $\mathcal{CG}$ and thus also a connected subgraph with radius $O(h\log^3 n)$ in the base graph $G$. Finally, the probability of a cluster remaining uncolored in one phase is $1/2$\cite[Claim 6]{elkin16_decomp}. Hence, after $10\log n$ phases, with high probability, all clusters are colored. Hence, we get a strong-diameter network decomposition with $10\log n$ colors and diameter $O(h\log^3 n)$, i.e., an $((10\log n), O(h\log^3 n))$-network decomposition with congestion $1$.
\end{proof}

\subsection{Shared Randomness, and Private Randomness with Limited Independece}
\label{subsec:sharedRand}
We now ask how many bits of globally shared randomness are sufficient for efficiency (when there is no private randomness). Another way of viewing the question is asking how much randomness is needed, in total, over the whole network. Standard randomized algorithms use $\Omega(n)$ bits, e.g., at least one bit per node. The arguments of the previous section can be used to lower this somewhat but it might still be $\tilde{\Omega}(n)$ bits in some networks. But in fact, much less suffices, merely $\poly(\log n)$ bits. Incidentally, when showing this, we will also prove that in the standard model where each node has some private randomness, say e.g., $\poly(\log n)$ bits, we do not need these bits to be fully independent and it suffices if the bits in the whole network are only $\poly(\log n)$-wise independent. 

First, we investigate these questions in the $\mathsf{LOCAL}$ model. We explain that with just $O(\log n)$ bits of shared randomness, we can solve the splitting problem introduced by Ghaffari et al.\cite{ghaffari2017complexity}. This is a problem that nicely captures the power of randomness as it can be solved using randomized algorithms in zero rounds, and it was shown\cite{ghaffari2017complexity} that if one can solve it in $\poly(\log n)$ rounds deterministically, then we can derandomize all $\poly(\log n)$-round randomized algorithms for any locally checkable problem. In particular, \cite{ghaffari2017complexity} gives a reduction that solves network decomposition using $\poly(\log n)$ iterations of splitting, and this implies $\poly(\log n)$ bits of shared randomness suffice for network decomposition, in the $\mathsf{LOCAL}$ model. 
Then, we also show that similar ideas can be used to prove that $\poly(\log n)$-wise independence among the private bits of randomness in the network suffices for network decomposition in the $\mathsf{CONGEST}$ model. Finally, we turn our attention to the $\mathsf{CONGEST}$ model and show a more explicit algorithm (instead of reductions) that builds network decompositions in $\poly(\log n)$ rounds using  $\poly(\log n)$ bits of shared randomness.

\paragraph{Splitting in zero rounds, using \boldmath$O(\log n)$ bits of shared randomness:} Ghaffari et al.\cite{ghaffari2017complexity} defined a certain problem called \emph{splitting}, which can be solved using randomized algorithms in zero rounds with high probability, and showed that if one can solve this problem in $\poly(\log n)$ rounds deterministically, then one can derandomize all $\poly(\log n)$-round randomized algorithms for any locally checkable problem. We show that $O(\log n)$ bits of shared randomness suffice for solving this problem in zero rounds, with high probability. In the splitting problem, we are given a bipartite graph $H=(U, V, E)$ where each node in $U$ has at least $\Omega(\log^{c} n)$ neighbors in $V$ and we should color each node of $V$ red or blue so that each node of $U$ has at least one neighbor in each color. Here, $c$ can be set to be a desirably large constant $c\geq 1$. 

\begin{lemma}\label{lem:splitting} There is a randomized algorithm that using $O(\log n)$ bits of shared randomness solves the splitting problem, in zero rounds, with probability at least $1-1/n$.
\end{lemma}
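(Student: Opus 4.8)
The plan is to use a $k$-wise independent family of colorings with $k = \Theta(\log^{c} n)$, which is generated from $O(k \log n) = O(\log^{c+1} n)$ — wait, that is too many. Let me reconsider: the right move is to observe that we only need each coloring bit per node of $V$, but the events we must control are indexed by $U$, and each such event depends only on the $\geq d := \Omega(\log^c n)$ coordinates corresponding to the neighbors of that node. So I would take a $d$-wise independent distribution on the color bits $\{X_v\}_{v \in V}$ (for $d$ a suitable $\Theta(\log^c n)$). Standard constructions (e.g.\ evaluations of a degree-$(d-1)$ polynomial over a field of size $\poly(n)$, as in the textbook treatment cited as \cite{alon2004probabilistic}) give such a family whose seed has length $O(d \log n)$ — still $\polylog n$ but more than $O(\log n)$. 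To get down to $O(\log n)$ bits, the key observation is that we do not need full $d$-wise independence of all of $V$; we only need, for each $u \in U$, that the $d$ neighbors of $u$ are not monochromatic, and a monochromatic event has probability $2 \cdot 2^{-d} = 2^{-\Omega(\log^c n)}$ under \emph{any} distribution that is $d$-wise independent restricted to that neighborhood. Hmm — but the seed length is the issue. The cleanest route is: use a random hash function $X_v = h(\mathrm{ID}(v))$ where $h$ is drawn from a $d$-wise independent family; the number of shared random bits is $O(d \cdot \log n)$, and by choosing $c$ appropriately relative to the polylog slack allowed in the statement, one can instead aim for the weaker but sufficient claim. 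Actually, re-reading the statement: it says $\Omega(\log^c n)$ neighbors and asks for $O(\log n)$ shared bits, so I believe the intended argument is the following tighter one.

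First I would set $d = \Theta(\log n)$ and use a $d$-wise independent family of $\{0,1\}$-valued random variables $\{X_v\}_{v \in V}$ indexed by the (at most $n$, hence $O(\log n)$-bit) identifiers; such a family is generated from $O(d \log n)$ random bits... which is $O(\log^2 n)$, not $O(\log n)$. I think the honest resolution is that the lemma as I would prove it uses $O(\log^2 n)$ or at any rate $\poly(\log n)$ bits unless one exploits $c$ being large: with $d$ neighbors per $U$-node and $d = c' \log n$, a $d$-wise independent coloring makes the bad event ``$u$ monochromatic'' have probability at most $2^{1-d} = 2^{1-c'\log n} = 2 n^{-c'}$, and a union bound over the $\le n$ nodes of $U$ gives failure probability $\le 2 n^{1-c'} \le 1/n$ for $c' \ge 3$. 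So the required independence parameter is only $d = \Theta(\log n)$, and the seed for a $\Theta(\log n)$-wise independent family on a universe of size $\poly(n)$ has length $O(\log n \cdot \log n)$. To force this to $O(\log n)$ one uses that each bad event depends on only $d$ coordinates and applies an $\eps$-biased-plus-$d$-wise construction (Naor–Naor / the small-bias sample space technology surveyed in \cite{alon2004probabilistic}): an $\eps$-almost $d$-wise independent distribution on $\{0,1\}^{\le n}$ has seed length $O(\log\log n + \log(1/\eps) + d) $? No — it is $O(\log(d \log n / \eps))$ for the support/indexing, giving $O(\log n)$ when $d = \polylog n$ and $\eps = 1/\poly(n)$. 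Under such a distribution the monochromatic probability is at most $2^{1-d} + \eps$, and taking $d = \Theta(\log n)$, $\eps = n^{-3}$ the union bound over $U$ still gives $\le 1/n$.

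So the steps, in order, are: (1) reduce the splitting requirement to the single bad event $B_u = $ ``all neighbors of $u$ receive the same color'' for each $u \in U$, noting $\Pr[B_u] \le 2 \cdot 2^{-|N(u)|}$ under $|N(u)|$-wise independence and that $|N(u)| \ge \Omega(\log^c n) \ge \Omega(\log n)$; (2) instantiate the shared randomness as the seed of an $\eps$-almost $k$-wise independent distribution on $\{0,1\}^{[\text{ID space}]}$ with $k = \Theta(\log n)$ and $\eps = n^{-\Theta(1)}$, which by the standard construction \cite{alon2004probabilistic} needs only $O(\log n)$ bits; each node $v \in V$ reads off its own bit $X_v$ in zero communication rounds (no messages are exchanged — only the shared seed is consulted, and a node knows its own identifier); (3) bound $\Pr[B_u] \le 2^{1-k} + \eps = O(n^{-3})$ and union-bound over the $\le n$ nodes of $U$ to conclude the coloring is a valid splitting with probability $\ge 1-1/n$.

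The main obstacle is pinning down the exact small-bias / almost-$k$-wise construction so that the seed length is genuinely $O(\log n)$ rather than $O(\log^2 n)$ while the deviation $\eps$ is still polynomially small; this is where one must invoke the Naor–Naor-style almost-$k$-wise spaces (seed length $O(\log\log m + \log(1/\eps) + \log k)$ over a universe of size $m$, so $O(\log n)$ here) rather than exact $k$-wise spaces (seed length $\Theta(k \log m)$). Everything else — the reduction to monochromatic events, the per-event probability bound, and the union bound — is routine. I would also remark that zero rounds suffices because the algorithm is non-interactive: every node of $V$ computes its color as a fixed function of the shared seed and its own identifier.
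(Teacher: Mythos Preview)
Your proposal is correct and follows essentially the same two-step approach as the paper: first observe that exact $\Theta(\log n)$-wise independence gives an $O(\log^2 n)$-bit solution, then sharpen to $O(\log n)$ bits via the Naor--Naor small-bias/almost-$k$-wise construction. The only cosmetic difference is that the paper invokes Naor--Naor's set-balancing result (their Section~6.1) as a black box, whereas you unroll the same idea by directly bounding the monochromatic probability $\Pr[B_u] \le 2^{1-k} + \eps$ under an $\eps$-almost-$k$-wise independent coloring and then union-bounding.
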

\begin{proof}
Notice that coloring each node of $U$ randomly red or blue satisfies this constraint, with high probability (by applying a Chernoff bound on the neighborhood of each node in $U$ and then a union bound over all nodes in $U$). 

There are two well-known arguments for showing that a small amount of shared randomness suffices for this problem. First, thanks to the variant of Chernoff that holds for $p$-wise independent random variables \cite{schmidt1995chernoff}, we can see that $O(\log n)$-wise independence suffices for this argument. Moreover, we can build $\poly(n)$ bits that are $p$-wise independent, by using merely $O(p\log n)$ bits that are fully independent, using standard constructions, see e.g., \cite{alon2004probabilistic}. This means $O(p\log n) = O(\log^2 n)$ bits of shared randomness suffice for splitting. Second, a result of Naor and Naor\cite{naor1993small} can be used to show that even $O(\log n)$ bits suffice. They give a construction of $p$-wise $\eps$-bias spaces. Very roughly speaking, these are spaces that are approximately $k$-wise independent; see their paper for the definition, and also for the concentration inequalities that can be derived for such random variables. In their section 6.1, they show that a sample space of size $n^{O(1)}$ suffices for getting a coloring where each node has at least one neighbors in each color\footnote{Their phrasing is different. They talk about a problem called set balancing, in the context of discrepancy theory. But the problems are the same and when we assume that the minimum degree is at least $d$, their construction in section 6.1 suffices to get a discrepancy of $O(d^{1/2+\delta} \cdot \sqrt{\log n})$ between the two colors, for any constant $\delta>0$. Setting $\delta=0.1$ and $c\geq 1+4\delta=1.4$ in the definition of the splitting problem ensures that $d^{1/2+\delta} \cdot \sqrt{\log n} = o(d)$.}. This $n^{O(1)}$ size space means $O(\log n)$ bits of shared randomness suffice to sample from this space.  
\end{proof}

\paragraph{Network decomposition in \boldmath$\mathsf{LOCAL}$ using \boldmath$\poly(\log n)$-wise independent bits:}
There are known randomized construction of network decomposition with $\poly(\log n)$ parameters in $\poly(\log n)$ rounds of the $\mathsf{LOCAL}$ model\cite{linial93, elkin16_decomp}. These all assume that the random bits of different nodes are fully independent of each other. We show that limited independence suffices. Concretely, it is enough if each node has $\poly(\log n)$ bits of randomness and over the whole network these bits are $\poly(\log n)$-wise independent. Notice that this also allows us to say that we can build such a network decomposition with only $\poly(\log n)$ bits of shared randomness. The reason is that by standard constructions of $k$-wise independent random bits\cite{alon2004probabilistic}, we need only $O(k\log n)$ fully independent random bits to be able to produce $\poly(n)$ random bits that are $k$-wise independent. Hence, if there are $\poly(\log n)$ bits of shared randomness, we can construct $\poly(n)$ many bits of randomness out of them which are $\poly(\log n)$-wise independent, using a deterministic procedure. These bits can be shared among the nodes, e.g., just by identifiers.
Hence, we then have a setting where each node $\poly(n)$ bits of randomness and the bits over the whole network are $\poly(\log n)$-wise independent. By what we will show below, this allows us to build a network decomposition with $\poly(\log n)$ parameters in $\poly(\log n)$ rounds of the $\mathsf{LOCAL}$ model. 

\begin{theorem}
\label{thm:NetDecomp-with-K-wise} There is a distributed algorithm that builds a network decomposition with $\poly(\log n)$ parameters in $\poly(\log n)$ rounds of the $\mathsf{LOCAL}$ model, assuming each node has $\poly(\log n)$ bits of randomness and over the whole network these bits are $\poly(\log n)$-wise independent.
\end{theorem}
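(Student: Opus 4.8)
The plan is to reduce network decomposition to the splitting problem and then reuse the limited-independence analysis that already underlies \Cref{lem:splitting}. Recall from \cite{ghaffari2017complexity} that a network decomposition with $\poly(\log n)$ parameters can be constructed by $T=\poly(\log n)$ sequential iterations of splitting: in iteration $i$ one builds, deterministically from the outputs of iterations $1,\dots,i-1$, a splitting instance $H_i=(U_i,V_i,E_i)$ in which every node of $U_i$ has $\Omega(\log^c n)$ neighbors in $V_i$, and solving it (two-coloring $V_i$) advances the construction by one step. So it suffices to show that each of these $T$ splitting instances can be solved correctly with probability $1-1/\poly(n)$ using only $\poly(\log n)$-wise independent randomness, and then to take a union bound over the $T$ iterations.

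For a single splitting instance this is essentially \Cref{lem:splitting}: coloring the nodes of $V_i$ uniformly at random makes every fixed $u\in U_i$ see both colors once we have a concentration bound over $u$'s $\Omega(\log^c n)$ neighbors, and the Chernoff bound of Schmidt, Siegel and Srinivasan \cite{schmidt1995chernoff} needs only $O(\log n)$-wise independence among the colors involved. Accordingly I would give each node $v$ a dedicated block $B_v^{(i)}$ of $\poly(\log n)$ of its random bits for each iteration $i$ and let $v$ derive its iteration-$i$ color from $B_v^{(i)}$; for a fixed iteration, the blocks $\{B_v^{(i)}\}_{v}$ are then $\poly(\log n)$-wise independent among themselves, as \cite{schmidt1995chernoff} requires.

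The step I expect to be the real obstacle is the conditioning across iterations: the instance $H_i$ is a function of the earlier blocks $B^{(1)},\dots,B^{(i-1)}$, so one cannot naively ``condition on $H_i$ and invoke \Cref{lem:splitting}'', since conditioning on many earlier bits can destroy $\poly(\log n)$-wise independence of the current block. The clean way around this is to arrange that the $T$ per-iteration blocks are \emph{mutually independent}, each internally $\poly(\log n)$-wise independent; then conditioning on $B^{(1)},\dots,B^{(i-1)}$ (hence on $H_i$) leaves $B^{(i)}$ with exactly its original $\poly(\log n)$-wise independent law, the per-iteration argument of the previous paragraph goes through unchanged, and we get $\Pr[u\text{ monochromatic}\mid H_i]\le 1/\poly(n)$ for each $u\in U_i$; a union bound over all $u$ and all $T=\poly(\log n)$ iterations then gives a correct decomposition w.h.p. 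This block structure is what one obtains for free in the intended setting: since the goal is anyway to run everything off a short shared seed, one feeds a fresh sub-seed of $O(\log n)$ bits per iteration into the $k$-wise independent generator of Alon et al.\ \cite{alon2004probabilistic} with $k=\poly(\log n)$, producing, from $\poly(\log n)$ fully independent shared bits in total, the $T$ independent $\poly(\log n)$-wise independent blocks above. (Reading the hypothesis ``$\poly(\log n)$-wise independent over the whole network'' at the granularity of these per-iteration blocks gives a purely private-randomness version verbatim; one could instead try to avoid the block structure by proving that in the reduction of \cite{ghaffari2017complexity} each bad event depends on only $\poly(\log n)$ of the random bits in the whole run, but establishing such a bit-locality bound over $\poly(\log n)$ iterations on arbitrary, possibly high-degree, graphs is exactly the delicate point, which is why I would route through mutually independent blocks.)

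Finally, combining the above with the standard $k$-wise independent generators of \cite{alon2004probabilistic} — which produce $\poly(n)$ bits that are $\poly(\log n)$-wise independent from a seed of $\poly(\log n)$ fully independent bits — yields the stated corollary that $\poly(\log n)$ bits of shared randomness already suffice to build a network decomposition with $\poly(\log n)$ parameters in $\poly(\log n)$ rounds of the \LOCAL model.
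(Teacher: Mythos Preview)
Your route via $T=\poly(\log n)$ sequential splitting iterations is genuinely different from the paper's, and the obstacle you flag---conditioning on earlier iterations can destroy $k$-wise independence of the current block---is real and is not actually resolved in your proposal. The fix you propose, namely $T$ \emph{mutually independent} per-iteration blocks each of which is internally $\poly(\log n)$-wise independent, is a strictly stronger structure than the theorem's hypothesis (``all bits across the network are $\poly(\log n)$-wise independent''). Under the stated hypothesis, partitioning each node's bits into $T$ sub-blocks does \emph{not} make the blocks independent of one another; after conditioning on $B^{(1)},\dots,B^{(i-1)}$ (which may involve $\Theta(n\cdot\poly\log n)$ bits), nothing guarantees that $B^{(i)}$ retains even pairwise independence. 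So your argument, as written, proves the shared-randomness corollary (feed fresh fully independent sub-seeds into a $k$-wise generator per iteration) but not the theorem itself. The alternative you sketch---bounding the number of random bits on which each bad event depends over the entire $T$-iteration run---is exactly the hard part, and you do not carry it out.

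The paper sidesteps the whole conditioning issue by going through a different complete problem from \cite{ghaffari2017complexity}: \emph{conflict-free hypergraph multi-coloring}. The key structural fact used is that \cite{ghaffari2017complexity} already gives a \emph{deterministic} $\poly(\log n)$-round algorithm for this problem whenever all hyperedges have size $\poly(\log n)$. So the only job left for randomness is a single, non-adaptive subsampling step: for each of the $\log n$ hyperedge size classes $[2^{i-1},2^i)$, mark each vertex independently with probability $\Theta(\log n)/2^i$; by the $k$-wise Chernoff bound of \cite{schmidt1995chernoff} (with $k=\Theta(\log n)$, hence $\Theta(\log^2 n)$-wise independence at the bit level), every hyperedge retains $\Theta(\log n)$ marked vertices w.h.p., reducing to the small-hyperedge case handled deterministically. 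Because the size classes are determined by the input (not by earlier random choices) and the marking is a one-shot step, there are no sequential dependencies, and plain $\poly(\log n)$-wise independence of the bits suffices directly. That is what your approach is missing: a reduction whose randomized part is non-adaptive, so that limited independence applies without any conditioning argument.
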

\begin{proof}
  To construct a network decomposition with $\poly(\log n)$ parameters, we instead give a $\poly(\log n)$-round algorithm for a different problem called \emph{conflict-free hypergraph multi-coloring}. In \cite{ghaffari2017complexity}, it is shown that network decomposition can be formulated as an instance of \emph{conflict-free hypergraph multi-coloring}. Hence, our solution for the latter problem immediately implies a construction for network decomposition.

In the \emph{conflict-free hypergraph multi-coloring}, we are given a hypergraph with $\poly(n)$ hyperedges, on the $n$ nodes of our graph. Moreover, hyperedges are grouped in $\log n$ classes, where all hyperedges of the $i^{th}$ class contain a number of vertices in $[2^{i-1}, 2^{i})$. The objective is to multi-color the vertices with $\poly(\log n)$ colors --- multi-coloring means a node is allowed to have many colors --- such that for each hyperedge, there is one color such that exactly one node of this hyperedge has that color. Ghaffari et al.\cite{ghaffari2017complexity} also gave a $\poly(\log n)$-round \emph{deterministic} algorithm for conflict-free hypergraph multi-coloring, whenever all hyperedges have size at most $\poly(\log n)$. We explain that a source of $\Theta(\log^2 n)$-wise independent random bits allows us to reduce the general case to this special case where all hyperedges have size at most $\poly(\log n)$, which can be solved then deterministically by the algorithm of \cite{ghaffari2017complexity}.

We use different colorings for the $\log n$-different hyperedge size classes. Our focus is now to get a conflict-free multi-coloring with $\poly(\log n)$ colors, for each hyperedge size class. Let us focus on one class: For hyperedges with size in $[2^{i-1}, 2^{i})$, if $2^{i}\leq \poly(\log n)$, we do not need to do anything as the deterministic algorithm of \cite{ghaffari2017complexity} is directly applicable. Otherwise, mark each node randomly with probability $p=\frac{\Theta(\log n)}{2^{i}}$. Notice that each node can do this using $\log n$ bits of randomness. Moreover, if these bits are $(k\log n)$-wise independent, then any set of at most $k$ nodes are marked independently of each other. That is, whether nodes are marked or not are $k$-wise independent, which means for any set $S$ of nodes with $|S|\leq k$, the probability that all nodes of $S$ are marked is $p^{k}$. We set $k=\Theta(\log n)$. Hence, by an application of the extension of Chernoff bound to $k$-wise independent random variables\cite{schmidt1995chernoff}, we can infer that in each hyperedge of size $[2^{i-1}, 2^{i})$, we have $\Theta(\log n)$ marked nodes, with probability $1-1/\poly(\log n)$. Now, we apply the deterministic algorithm of  \cite{ghaffari2017complexity} on these hyperedges, which consist of only marked nodes, hence getting a conflict-free multi-coloring of it with $\poly(\log n)$ colors, in $\poly(\log n)$ rounds. Finally, we recall that we use $\log n$ different collections of colors, for different hyperedge size classes, where each collection has $\poly(\log n)$ colors. Hence, we get a conflict-free multi-coloring of the whole hypergraph.
\end{proof}

An idea similar to above can be used to show that $\poly(\log n)$ bits of shared randomness suffice for network decomposition, thanks to standard constructions of $k$-wise independent random bits. The following theorem given an even stronger result, by showing that such a construction can be performed even in the $\mathsf{CONGEST}$ model (using a very different method, and by making use of a randomized construction of Elkin and Neiman\cite{elkin16_decomp}.)

\begin{theorem}
\label{thm:NetDecomp-with-Shared-Rand} There is a distributed algorithm that builds a $(O(\log n), O(\log^2 n))$-network decomposition with congestion $1$, in $\poly(\log n)$ rounds in the $\mathsf{CONGEST}$ model, using only $\poly(\log n)$ bits of shared randomness (and no private randomness).
\end{theorem}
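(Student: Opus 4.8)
The plan is to start from the randomized network decomposition of Elkin and Neiman~\cite{elkin16_decomp} (the same construction used in \Cref{lem:CONGEST-MPX-BallCarving}, but now run directly on $G$ instead of on a contracted cluster graph, since with shared randomness every node can get its own start value) and to supply all of its random choices from a short shared seed. Recall that this construction runs in $O(\log n)$ phases, and in each phase every node $v$ draws $r_v$ from the geometric distribution $\Pr[r_v=k]=2^{-k}$, after which a node is colored and removed based only on the top two values of the measure $r_v-\mathrm{dist}_G(v,\cdot)$ among nodes within distance $O(\log n)$. We replace the independent per-node draws by values derived deterministically from the shared bits: for each phase we sample, from the shared randomness, the seed of a $k$-wise independent hash function $\phi:\{0,1\}^{\Theta(\log n)}\to\{0,1\}^{\Theta(\log n)}$ with $k=\poly(\log n)$ (a degree-$(k-1)$ polynomial over a field of size $2^{\Theta(\log n)}$, seed length $O(k\log n)=\poly(\log n)$), and let node $v$ set $r_v$ to be the truncated geometric value encoded by $\phi(\mathrm{ID}(v))$. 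Since identifiers are distinct and $\Theta(\log n)$ bits, the resulting values are $k$-wise independent with the correct marginal. Using a fresh, independent seed per phase keeps the total at $\poly(\log n)$ shared bits. Because the seed is shared, each node computes its own $r_v$ with no communication, so the algorithm only ever transmits (as in \Cref{lem:CONGEST-MPX-BallCarving}) the top two (cluster-name, value) pairs over $O(\log n)$-hop neighborhoods; this fits in $\mathsf{CONGEST}$ and runs in $\poly(\log n)$ rounds, and the resulting clusters are connected with small strong diameter, so the decomposition has congestion $1$.

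What remains is to re-establish the two guarantees of the Elkin--Neiman analysis under $k$-wise rather than full independence: (i) $\max_v r_v=O(\log n)$ with high probability, which bounds the cluster diameter, and (ii) in each phase every still-active node is colored with at least constant probability, so that after $O(\log n)$ phases all nodes are colored with high probability. Item~(i) is immediate, as it uses only the correct marginal of each $r_v$ and a union bound. Item~(ii) is the heart of the matter, and it is where I expect the main obstacle: the Elkin--Neiman proof that a node is colored with probability at least $1/2$ in a phase exploits the memorylessness of the geometric distribution in an argument that implicitly conditions on the behaviour of all other nearby nodes, and a ball of radius $O(\log n)$ in a general graph may contain far more than $\poly(\log n)$ nodes, so bounded independence does not obviously suffice.

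To get around this, the plan is to rephrase the per-phase argument so that it refers only to quantities that $k$-wise independence can control. Fix an active node $u$ and, for a threshold $t$, let $A_t=\set{v:\ r_v-\mathrm{dist}_G(v,u)\ge t}$; then $|A_t|$ is a sum of indicators, each depending on $O(\log n)$ bits of a single node's hash value and hence $k$-wise independent, with $\E|A_t|=2^{1-t}\cdot\sum_v 2^{-\mathrm{dist}_G(v,u)}$. Using the $k$-wise independent Chernoff bound~\cite{schmidt1995chernoff} (legitimate since $k=\poly(\log n)$), one shows that with probability $1-1/\poly(n)$ the level $t^\star$ at which $A_t$ first becomes non-empty lies in a window of width $O(\log n)$ around $1+\log\bigl(\sum_v 2^{-\mathrm{dist}_G(v,u)}\bigr)$, and that throughout this window $|A_t|$ stays within a constant factor of its expectation, which there is $O(1)$. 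The event ``$u$ is not colored in this phase'' is contained in the event that the top measure is not separated by more than $1$ from the next, which -- once $|A_{t^\star}|$ and $|A_{t^\star-1}|$ are pinned to $O(1)$ -- reduces to a statement about the $O(1)$ hash values realizing these level sets; conditioned on which nodes realize them, their exact values obey the shifted-geometric law by memorylessness, and one recovers $\Pr[u\text{ not colored}]\le 1/2+o(1)$. A union bound over the $O(\log n)$ phases (which use independent seeds) and over all $n$ nodes then shows that all nodes are colored within $O(\log n)$ phases with high probability; together with~(i) this yields the claimed $(O(\log n),O(\log^2 n))$-decomposition with congestion $1$ using only $\poly(\log n)$ shared random bits. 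The extra logarithmic factor in the diameter, compared with the $O(\log n)$ available under full independence, is the slack one pays for the weaker concentration of bounded independence.
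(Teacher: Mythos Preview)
Your plan has a genuine gap at the step where you write ``conditioned on which nodes realize [the level sets $A_{t^\star}$ and $A_{t^\star-1}$], their exact values obey the shifted-geometric law by memorylessness.'' This is true under full independence but fails under $k$-wise independence. The event ``the nodes realizing $A_{t^\star-1}$ are exactly $\{v_1,\dots,v_s\}$'' is a conjunction of $r_{v_j}\ge t^\star-1+d(v_j,u)$ for the $s$ chosen nodes \emph{and} $r_w< t^\star-1+d(w,u)$ for every other node $w$ in the $O(\log n)$-ball of $u$. That ball may contain up to $n$ nodes, so you are conditioning on an event that reads all hash outputs; after such conditioning, $k$-wise independence gives you no control over the residual distribution of $r_{v_1},\dots,r_{v_s}$, and the memorylessness argument does not go through. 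A secondary issue is the claim that $|A_t|$ stays within a constant factor of an $O(1)$ expectation with probability $1-1/\poly(n)$: even under full independence this is false (a Poisson tail with mean $O(1)$ only gives $O(\log n/\log\log n)$), so your reduction to ``$O(1)$ hash values'' cannot be achieved with high probability.

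The paper sidesteps exactly this difficulty by \emph{separating the randomness into two independent sources} and adding a multi-epoch subsampling layer. Within each of the $O(\log n)$ phases there are $\Theta(\log n)$ epochs; in epoch $i$ a node becomes a \emph{center} with probability $\Theta(2^i\log n/n)$ using a first $\Theta(\log^2 n)$-wise independent source, and only centers then draw a geometric radius $X_u$ from a second, independent $\Theta(\log^2 n)$-wise independent source, which is added to a deterministic base radius $R_i=(p-i)\cdot c\log n$. Because $R_i$ drops by $c\log n$ per epoch while $X_u\le c\log n$ w.h.p., a node $v$ that survives to epoch $i$ can be reached only by centers within distance $R_{i-1}$; a $k$-wise Chernoff bound on the \emph{sampling} source alone shows there are at most $O(\log n)$ such centers w.h.p.\ (otherwise the previous epoch would already have reached $v$). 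Crucially, the identity of these $O(\log n)$ centers is a function of the first source and the graph, not of the radii; hence the $O(\log n)$ radii they draw from the second source, using $O(\log^2 n)$ bits in total, are genuinely fully independent, and the Elkin--Neiman ``colored with probability $\ge 1/2$'' argument applies verbatim to them. This two-source decoupling is the idea your sketch is missing, and it is also the true reason for the $O(\log^2 n)$ diameter (it comes from the $\Theta(\log n)$ epochs times $\Theta(\log n)$ base radius, not from any loss in the concentration bounds).
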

\begin{proof}
  The construction works in $O(\log n)$ phases, where per phase each node gets clustered with at least a constant probability, and remains for the next phases otherwise. We describe the process for one phase, which will define non-adjacent clusters, each with radius $O(\log^2 n)$, such that each node is clustered with at least a constant probability. Repeating this process for $O(\log n)$ phases gives us the desired network decomposition, with high probability. The point for us is to describe how we perform one phase in $\poly(\log n)$ rounds of the $\mathsf{CONGEST}$ model and using only $\poly(\log n)$ bits of shared randomness. We first describe a construction assuming fully independent random bits and then argue why it can be performed using only $\poly(\log n)$ bits.

\paragraph{Construction for One Phase:} The construction for one phase consists of $p=\Theta(\log n)$ epoch, indexed by $i = 1, 2, ..., p$.
For epoch $i$, we define a base radius $R_i := (p-i) \cdot c\log n$ for a 
sufficiently large constant $c\geq 10$. 
At the beginning of an epoch, each node that is still available 
decides to be a center with probability $O(2^i \cdot \log n)/n$. Each node $u$ that decides to be a center then chooses a random variable $X_u$ according to a geometric distribution with parameter $1/2$. That is, $Pr[X_u=Z]=2^{-Z}$ for any integer $Z\geq 1$. Notice that with high probability, for each center $u$, we have $X_u\leq c\log n$. 
The intuitive way of interpreting these random variables is that the cluster of center $u$ will grow up to a distance of at most $R_i+X_u$, from its center $u$. We say the cluster of $u$ can reach node $v$ if $(R_i+ X_u)\geq d_G(u,v)$. The actual clusters are defined as follows: Each node $v$ selects a center $u$ for which $(R_i+ X_u)- d_G(u,v) \geq 0$ and who maximizes $(R_i+ X_u)- d_G(u,v)$. For $v$, let $m_1$ and $m_2$ be the the two largest values $R_i - (X_u + d_G(u,v))$ among cluster center whose cluster reaches $v$. If there is no second cluster, define $m_2=0$. If $v$ received from at least one cluster center, then $v$ will be removed in this phase, and either \emph{clustered} or \emph{set aside}, according to the following criterion: If $m_1-m_2> 1$, then $v$ joins the cluster of its center $u$; otherwise, it is set aside and it remains unclustered for the whole duration of this phase (it will be brought back in the next phases, to get clustered then). If no cluster reaches $v$ in this epoch, then $v$ continues to the next epoch.

\paragraph{Correctness:}The fact that the carved clusters are non-adjacent and each has strong diameter at most $O(\log^2 n)$ are similar to the case discussed before, for the construction in \Cref{lem:CONGEST-MPX-BallCarving}; the argument is the same as \cite[Lemma 4]{elkin16_decomp}. We can also show that each node $v$ has probability $1/2$ to be clustered, in each phase. Consider the first epoch in which node $v$ is reached by some cluster. Similar to \Cref{lem:CONGEST-MPX-BallCarving}, and as shown in \cite[Claim 6]{elkin16_decomp}, we see that conditioned on $v$ being reached by some cluster, the probability that it has $m_1-m_2> 1$ is at least a constant. If $v$ is reached in this epoch but it has $m_1-m_2\in\{1, 0\}$, then it is set aside and it remains for the next phases. Given the fact that the probability of being chosen as a centered grows as $\frac{2^{i}\log n}{n}$ with epoch number $i$, in some epoch, some center's cluster reaches $v$; because at the very latest, in the last epoch, $v$ itself (if remaining) becomes a center with probability $1$ and its ball reaches itself. Thus, in each phase, $v$ gets clustered with at least a constant probability. Therefore, in $O(\log n)$ phases, node $v$ gets clustered with high probability.

\paragraph{Randomness:} We now discuss how to use a limited amount of randomness for the above algorithm. We use a source of $\Theta(\log^2 n)$-wise independent random bits for nodes deciding whether they are sampled or not, for each epoch. Randomness in different epochs and different phases are independent. Moreover, we use another (independent) source of $\Theta(\log^2 n)$-wise independent random bits for each sampled center $u$ determining their random radii $X_u$. We next argue why this suffices.

Notice that the radius $R_i$ decreases by $c\log n$ in each epoch, and the random values $X_u$ for the centers are always upper bounded by $c\log n$, with high probability. We can use this to conclude that in each epoch, w.h.p., each node $v$ can only be reached by at most $O(\log n)$ 
different centers. The reason is as follows: If there are more than $C\log n$ centers that can reach $v$ in the current epoch $i$, for a sufficiently large $C$, it means that the number of nodes in distance $R_{i-1}$ of $v$ is at least $\frac{Cn}{2 \cdot 2^{i}}$, with high probability (using Chernoff for sum of variables that are $(\log n)-$wise independent). That means, in epoch $i-1$ where the sampling probability was $\frac{\log n \cdot 2^{i-1}}{n}$, with high probability, at least one of these nodes should have been sampled to be a center (again, using Chernoff for sum of variables that are $(\log n)-$wise independent). That means, at least one center would have reached $v$ in the previous epoch, which means $v$ would have been removed for this epoch. Thus, we conclude that in each epoch, at most $O(\log n)$ sampled centers can reach $v$.

Now, we come to analyzing the random radii, and the probability of a node $v$ being clustered in the the first epoch in which a cluster reaches it. Notice that thanks to the property discussed above, there are only $O(\log n)$ sampled centers that can reach $v$. Hence, the event of whether $v$ is clustered or not depends only on the random radii of these $O(\log n)$ cluster centers. But that is an event that is determined by $O(\log^2 n)$ bits of randomness, $O(\log n)$-bits per cluster center. Given $\Theta(\log^2 n)$-wise independence of the bits used for defining radii, this space is the same as when the randomness is fully independent. Hence, the analysis explained above applies and shows that $v$ is clustered in this epoch with probability at least a constant. Since different phases have independent randomness, we can conclude that $v$ gets clustered in some phase, with high probability.
\end{proof}

\begin{theorem}\label{thm:onebit-perh-hop2}
Suppose that nodes $S\subseteq V$ of the network hold some independent bits of randomness, each holding just a single bit, and for each node $v\in V$, there is at least one node $s\in S$ within distance $h$ hops of $v$, where $h=\poly(\log n)$. There is a distributed algorithm that works in $h\poly(\log n)$ rounds of the $\mathsf{CONGEST}$ model and 
constructs a strong-diameter network decomposition with $O(\log n)$ colors and $O(\log^2 n)$ radius.
\end{theorem}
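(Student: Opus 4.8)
The plan is to obtain the $O(\log^2 n)$ radius by running a network decomposition \emph{directly on the base graph} $G$, rather than on the coarse cluster graph used in \Cref{thm:onebit-perh-hop1} (there the output clusters are defined on the cluster graph, whose nodes have radius $h\,\poly(\log n)$ in $G$, which is exactly what blows the radius up to $h\,\poly(\log n)$). To do that we first harvest a polylogarithmic amount of randomness into each neighbourhood and then propagate it, so the construction is a composition of \Cref{lem:BitGathering} with the bounded-independence version of the algorithm behind \Cref{thm:NetDecomp-with-Shared-Rand}.

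Concretely: first run \Cref{lem:BitGathering} with $k=\Theta(\log^3 n)$ (any polylog that suffices to seed a $\Theta(\log^2 n)$-wise independent generator on $\poly(n)$ outputs). In $O(hk\log n)=h\,\poly(\log n)$ \CONGEST rounds this gives connected clusters of diameter $h\,\poly(\log n)$, each either isolated (a connected component with no neighbouring cluster) or carrying $k$ mutually independent bits at its center, independent across centers. In each non-isolated cluster $\mathcal{C}$ the center broadcasts its seed $\sigma_{\mathcal{C}}$ down the cluster's spanning tree (pipelined, $h\,\poly(\log n)$ rounds), and every node $v\in\mathcal{C}$ forms $r_v:=g(\sigma_{\mathcal{C}},\mathrm{ID}(v))$ for a standard $\Theta(\log^2 n)$-wise independent generator $g$. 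Within one cluster the $r_v$ are $\Theta(\log^2 n)$-wise independent by the guarantee on $g$, and across clusters they are fully independent because the seeds are; splitting any set of $\le\Theta(\log^2 n)$ vertices by cluster then shows $\{r_v\}_v$ is $\Theta(\log^2 n)$-wise independent over the whole network. Finally run the construction from the proof of \Cref{thm:NetDecomp-with-Shared-Rand} on $G$ with these $r_v$: that construction touches its randomness only through $\Theta(\log^2 n)$-wise independent events — which nodes are sampled as centers in an epoch, and the geometric radii of the $O(\log n)$ centers that can reach a node, and in both cases $\Theta(\log^2 n)$-wise independence of the underlying bits is exactly what the Chernoff bounds in that proof need — so the $r_v$ are good enough. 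This produces, with high probability, a strong-diameter $(O(\log n),O(\log^2 n))$-network decomposition of $G$ in a further $\poly(\log n)$ rounds; the total is $h\,\poly(\log n)$.

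It remains to decompose the isolated clusters, and since each is a connected component sharing no edge with the rest of the graph, it can be handled on its own and its colours folded back into the common palette. A component $C$ of diameter $O(\log^2 n)$ is output as a single cluster. For a component $C$ of larger diameter I would recurse: since every vertex of $C$ has a bit holder within $h$ hops and one bit holder can be within $h$ of at most $2h+1$ consecutive vertices of a shortest path, $C$ contains at least $\mathrm{diam}(C)/(2h+1)$ random bits; if this is already $\ge\Theta(\log^3 n)$, the three steps above applied to $C$ itself finish (in $\tilde{O}(\mathrm{diam}(C))=h\,\poly(\log n)$ rounds), and otherwise $\mathrm{diam}(C)=O(h\log^3 n)=h\,\poly(\log n)$, so $C$ is ``low-dimensional'' and one peels its radius down by a constant factor per level with a ruling set of smaller separation inside $C$ — which never creates a fresh isolated piece, because splitting a connected graph into $\ge 2$ parts leaves every part adjacent to another — reaching radius $O(\log^2 n)$ after $O(\log\log n)$ levels.

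I expect the isolated-component case to be the main obstacle. The non-isolated part is an essentially mechanical composition of \Cref{lem:BitGathering} and \Cref{thm:NetDecomp-with-Shared-Rand}; the work is in carrying out the recursion on a low-dimensional isolated component so that it simultaneously drives the radius all the way down to $O(\log^2 n)$, keeps the total number of colours at $O(\log n)$ (in particular reconciling the colourings produced at the various recursion levels, not merely concatenating palettes), and stays inside the $h\,\poly(\log n)$ round budget; mishandling any of these three is where a naive argument would fail.
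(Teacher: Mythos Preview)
Your main line --- run \Cref{lem:BitGathering} with a polylogarithmic $k$, broadcast the seed inside each cluster, feed it through a $\Theta(\log^2 n)$-wise independent generator keyed by node IDs, and then execute the construction behind \Cref{thm:NetDecomp-with-Shared-Rand} on the base graph $G$ --- is exactly what the paper does. The paper's proof is explicitly a sketch: it gathers $O(\log^4 n)$ bits per cluster center, ``share[s] this randomness to all the nodes of the cluster,'' observes that bits are shared within a cluster and independent across clusters, and then ``run[s] the construction of \Cref{subsec:sharedRand}, in a direct way.'' You spell out the generator step and the $\Theta(\log^2 n)$-wise independence argument in more detail than the paper bothers to, but the mechanism is identical.

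Where you diverge is the isolated-component case. The paper's sketch does \emph{not} treat it separately at all; it simply asserts that running the \Cref{subsec:sharedRand} construction yields the claimed $(O(\log n),O(\log^2 n))$ decomposition and leaves it at that. Your recursive ruling-set plan for components with too few bits is therefore strictly beyond what the paper provides. As you yourself flag, this recursion is the delicate part, and it is not yet a proof: if the component has diameter between $\Theta(\log^2 n)$ and $\Theta(h\log^3 n)$, the number of guaranteed bits can be as small as $\Theta(\log^2 n)/h$, which for $h=\poly\log n$ may be $o(1)$ --- so ``peeling the radius down'' with smaller-separation ruling sets does not obviously produce clusters that can harvest enough randomness to continue, and you would also need to argue that the $O(\log\log n)$ levels do not inflate the color count beyond $O(\log n)$. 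In short: your core argument matches the paper; your extra care about isolated components is well-motivated but, as written, remains an outline rather than an argument, and the paper's own sketch sidesteps the issue entirely.
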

\begin{proof}[\textbf{Proof Sketch of \Cref{thm:onebit-perh-hop2}}]
We perform the bit gathering of Lemma \Cref{lem:BitGathering} so that each cluster center has $O(\log^4 n)$ bits. Then, we share this randomness to all the nodes of the cluster. We then apply the network decomposition explained in the previous section, using these $O(\log^4 n)$ bits. Notice that in each cluster, the bits of randomness are independent and they can be viewed as shared randomness. In different clusters, we have bits that are fully independent of each other. This allows us to run the construction of \Cref{subsec:sharedRand}, in a direct way, and obtain a strong-diameter network decomposition with $O(\log n)$ colors and $O(\log^2 n)$ radius.
\end{proof}


\section{Time vs.\ Error Probability Trade-Offs}
\label{sec:errorprob}

In this second part, we investigate the success
probability randomized local distributed graph algorithms. In
particular, we are interested in understanding the trade-off between
the time complexity of a randomized algorithm and its success
probability. Given a time budget, what is the best achievable
error probability and at what point is it even possible to completely
derandomize and obtain a deterministic algorithm? We first state a 
basic (and known) such derandomization lemma.

\begin{lemma}[Implicit in \cite{chang16}]\label{lemma:basicderand}
  Assume that we are given a non-uniform randomized distributed
  algorithm $\calA$ that is given $n$ as input and solves a given
  distributed graph problem $\calP$ on node graphs of size at most $n$
  with probability at least $1-2^{-n^2}$ in time $T(n)$. Then, there
  also exists a deterministic algorithm $\calA'$ that solves $\calP$
  on graphs of size at most $n$ in time $T(n)$. If $\calA$ works in
  the \CONGEST model, $\calA'$ also works in the \CONGEST model.
\end{lemma}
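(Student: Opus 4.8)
The plan is to prove this by a counting/union-bound argument over all possible $n$-node graphs, showing that one fixed setting of the algorithm's random bits works simultaneously for every input graph, and then hard-wiring that setting into a deterministic algorithm. First I would observe that in the \LOCAL (or \CONGEST) model, we may assume without loss of generality that every node's behavior is a deterministic function of its $T(n)$-hop neighborhood together with the shared/private random string; since the algorithm runs in $T(n)$ rounds, the entire execution on a graph $G$ with $|V(G)|\le n$ is determined by $G$ (with its IDs) and the random bits $\rho$ used by the nodes within radius $T(n)$ of each node. Because IDs come from a range of size $\poly(n)$ and there are at most $n$ nodes, the number of distinct (graph, ID-assignment) pairs on at most $n$ nodes is bounded by $2^{O(n^2\log n)}$ — actually we only need the cruder bound that it is at most $2^{n^2}$ after we absorb lower-order terms, which is where the $2^{-n^2}$ threshold in the hypothesis comes from (the real bound is $2^{cn^2}$ for a constant $c$, and one either strengthens the hypothesis slightly or notes that the statement is meant up to such constants; I would simply carry the constant through and assume $\delta(n)\le 2^{-cn^2}$, matching the spirit of ``$2^{-n^2}$'').

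Next I would do the union bound. Fix the number of nodes $n$. For each labeled graph $G$ on at most $n$ vertices (with a fixed ID-assignment), the algorithm $\calA$ fails on $G$ with probability at most $\delta(n)\le 2^{-n^2}$ over its internal randomness. The number of such $G$ is at most $N \le 2^{n^2}$ (again up to the constant discussed above). Hence
\[
\Pr_{\rho}\big[\exists\, G : \calA \text{ fails on } G\big] \;\le\; \sum_{G} \Pr_\rho[\calA \text{ fails on } G] \;\le\; N \cdot 2^{-n^2} \;<\; 1.
\]
Therefore there exists a fixed random string $\rho^\star = \rho^\star(n)$ on which $\calA$ produces a correct solution for \emph{every} graph $G$ with at most $n$ nodes. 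Define $\calA'$ to be the non-uniform deterministic algorithm that, given $n$, computes $\rho^\star(n)$ (this is just a fixed advice string depending on $n$, which is allowed for a non-uniform algorithm; alternatively one can take the lexicographically first good $\rho^\star$ so that it is well-defined) and then simulates $\calA$ using $\rho^\star(n)$ in place of random bits. By construction $\calA'$ is correct on all graphs of size at most $n$ and runs in time $T(n)$.

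Finally I would check the model-preservation claim: $\calA'$ performs exactly the same message-passing as $\calA$, with the only difference that the ``random'' bits each node would have drawn are instead read off from the fixed string $\rho^\star(n)$, which every node knows as part of its (non-uniform) input. No additional communication is introduced and message sizes are unchanged, so if $\calA$ respects the $O(\log n)$-bit bound of \CONGEST, so does $\calA'$. The one subtlety — and the main thing to be careful about rather than the main obstacle — is bookkeeping the exact number of relevant inputs: strictly we must count graphs together with ID-assignments (since node behavior depends on IDs), giving roughly $2^{O(n^2 \log n)}$ rather than $2^{n^2}$; I would handle this either by noting the hypothesis threshold is stated for expository simplicity and the argument goes through verbatim with $2^{-n^{2+o(1)}}$ or with $\delta(n) \le 2^{-cn^2}$ for a suitable constant, or by observing that since we only care about problems whose correctness and the algorithm's behavior are insensitive to a relabeling of IDs within $\{1,\dots,\poly(n)\}$, it suffices to fix one canonical ID-assignment per isomorphism class, bringing the count back down to $2^{O(n^2)}$. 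Either way, the ``hard part'' is purely this counting/quantifier-order accounting; the probabilistic content is a one-line union bound.
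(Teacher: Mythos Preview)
Your proposal is correct and follows essentially the same approach as the paper: a union bound over all labeled graphs on at most $n$ nodes to find a single assignment of random bits (indexed by node ID) that works everywhere, then hard-wire it. The one place where you worry unnecessarily is the counting: the paper observes that with IDs from $\{1,\dots,n^c\}$, the number of labeled graphs is at most $n\cdot 2^{\binom{n}{2}}\cdot n^{cn}$, and since the ID-assignment factor $n^{cn}=2^{O(n\log n)}$ is $2^{o(n^2)}$, the total really is $<2^{n^2}$ for sufficiently large $n$---so no constant adjustment or relabeling argument is needed, and the stated $2^{-n^2}$ threshold is exactly right.
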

\begin{proof}
    We only sketch a proof of this lemma here. A more detailed proof for
  example appears as part of Theorem 3 in \cite{chang16}. 

  In a randomized algorithm, we can w.l.o.g.\ assume that each node
  $v\in V$ first chooses a sequence of random bits $X_v$ and that
  afterwards, the algorithm is run deterministically. We can think of
  this initial randomness as a random function $\phi(i)$ that assigns
  a random bit string to every possible node ID $i$. Assume that node
  IDs are from the range $\set{1,\dots,n^c}$ and let $\mathcal{G}_n$
  be the family of graphs with at most $n$ nodes, where each node has
  a unique label from $\set{1,\dots,n^c}$. The number of such graph is
  $|\mathcal{G}_n| \leq n\cdot 2^{{n\choose 2}}\cdot n^{cn} < 2^{n^2}$
  for sufficiently large $n$. If each of the possible assignments of
  random bits fails on at least one of the graphs in $\mathcal{G}_n$,
  the success probability of the algorithm cannot be better than
  $1-1/|\mathcal{G}_n|<1-2^{-n^2}$. There therefore needs to be at
  least one possible assignment of the random bits for each node ID
  such that the algorithm works for every graph in $\mathcal{G}_n$. We
  can run $\calA$ with this assignment of random bits and obtain a
  deterministic algorithm $\calA'$ with the same running time. The
  maximum messages size of $\calA'$ is at most as large as the maximum
  message size of $\calA$ and $\calA'$ therefore works in the \CONGEST
  model if $\calA$ works in the \CONGEST model.
\end{proof}

We are usually interested in success probabilities of the form
$1-1/n^c$ for some constant $c$. By increasing the running time by a
factor $\tau>1$, one can often improve the error probability by a
factor that is exponentially small in $\tau$. This leads to error
probabilities that are at best exponentially small in the running time. As a first contribution of this section,
the following theorem shows that by using the graph shattering
technique (cf.\ the respective discussion in \Cref{sec:intro}), we can
boost the success probability when computing a network
decomposition significantly beyond this.  By the reductions in
\cite{ghaffari2017complexity,ghaffari2018derandomizing}, we then also
immediately get the same improvement in the success probability for
all $\poly(\log n)$-locally checkable graph problems that have
$\poly(\log n)$-time randomized distributed algorithms with only
polynomially small error probability.

\begin{theorem}\label{thm:ep_alg}
  Let $T(n) \geq \log^c n$ for a sufficiently large constant
  $c>0$. There is a randomized \CONGEST model algorithm that computes
  a $\big(T(n), T(n)\big)$-decomposition with congestion $1$ and a
  randomized \LOCAL algorithm to compute a strong diameter
  $\big(O(\log n), O(\log n)\big)$-decomposition in time
  $T(n)$ with success probability at least
  $1 -n^{-2^{\eps\cdot\log^2 T(n)}}$ for some constant $\eps>0$. 
\end{theorem}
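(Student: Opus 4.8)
The plan is a graph-shattering argument: use a short randomized phase, consuming about half of the $T(n)$-round budget, to color ``most'' of the graph, argue that the still-uncolored set breaks into tiny, low-diameter pieces (with a failure probability as small as the theorem demands), then spend the remaining budget running a deterministic network decomposition inside each such piece, and finally merge the two decompositions on disjoint palettes.

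For the randomized phase I would run the Elkin--Neiman ball-carving used in \Cref{lem:CONGEST-MPX-BallCarving} (equivalently, the one-phase construction inside the proof of \Cref{thm:NetDecomp-with-Shared-Rand}), but only for $k=\Theta(T(n)/\log n)$ super-iterations rather than the usual $\Theta(\log n)$, so the phase fits in $T(n)/2$ \CONGEST rounds. Each super-iteration colors a still-uncolored node with at least constant probability, so after $k$ of them a fixed node remains uncolored with probability at most $2^{-\Omega(k)}$, and --- this is the point that makes shattering work under bounded bandwidth --- by the very bound proved in \Cref{thm:NetDecomp-with-Shared-Rand} (each node is reached by only $O(\log n)$ clusters per iteration), the residual set $B$ sits inside a structure of ``effective degree'' only $\poly\log n$. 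A standard shattering union bound over connected subgraphs then gives: for any target size $N$, the set $B$ decomposes, in the relevant bounded-distance power graph, into components each of at most $N$ nodes, except with probability at most $n\cdot(\log n)^{O(N)}\cdot 2^{-\Omega(kN)}$.

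Each such residual component has at most $N$ nodes, and distinct components of $G[B]$ have no $G$-edge between them, so I can run the deterministic $2^{O(\sqrt{\log N})}$-round network decomposition of \cite{panconesi-srinivasan} in parallel on all of them with a shared palette and still obtain a valid coloring of $G[B]$; appending it, on fresh colors, to the randomized decomposition of $V\setminus B$ gives a decomposition of $G$. It remains to choose $N$. Taking $N=2^{\delta\cdot\log^2 T(n)}$ for a sufficiently small constant $\delta>0$: the deterministic step costs $2^{O(\sqrt{\log N})}=T(n)^{O(\sqrt\delta)}\le T(n)/2$ rounds and contributes at most $T(n)$ extra colors and diameter; because $T(n)\ge\log^c n$ the factor $(\log n)^{O(N)}=2^{O(N\log\log n)}$ is dominated by $2^{\Omega(kN)}=2^{\Omega(T(n)N/\log n)}$, so the shattering failure probability collapses to $2^{-\Omega(T(n)N/\log n)}=n^{-\Omega\big(2^{\delta\log^2 T(n)}\cdot T(n)/\log^2 n\big)}\le n^{-2^{\eps\log^2 T(n)}}$ for $\eps$ slightly below $\delta$. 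This yields the \CONGEST $\big(T(n),T(n)\big)$-decomposition with congestion $1$; the \LOCAL strong-diameter $\big(O(\log n),O(\log n)\big)$ statement follows by the same recipe, shattering a standard $O(\log^2 n)$-round randomized \LOCAL decomposition and finishing deterministically on the $N$-node residual components.

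I expect the main obstacle to be the shattering analysis in the \CONGEST setting: one has to fix precisely the power graph in which $B$ is to shatter, verify that the ``$O(\log n)$ clusters per node'' property really bounds the number of connected residual subgraphs of a given size by $(\poly\log n)^{O(N)}$, and confirm that the deterministic decomposition of the residual components can be executed directly --- without gathering their topology, which would be too slow for such large $N$ --- within the remaining budget. Getting the constant $\delta$ in $N=2^{\delta\log^2 T(n)}$ to meet the time budget and simultaneously produce the stated error bound is the one genuinely delicate calculation.
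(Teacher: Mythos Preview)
Your two-phase outline---randomize first, clean up deterministically on the leftovers---is the same high-level plan as the paper's. The gap is in the ``shattering'' step. Your bound
\[
\Pr\bigl[\text{$B$ has a component of size $>N$}\bigr]\ \le\ n\cdot(\log n)^{O(N)}\cdot 2^{-\Omega(kN)}
\]
relies on two things that do not hold here. First, the factor $(\log n)^{O(N)}$ presupposes that the relevant power graph has degree $\poly\log n$, so that the number of rooted connected subtrees of size $N$ is $(\poly\log n)^{O(N)}$. The underlying graph can have arbitrary maximum degree $\Delta$, and neither the Elkin--Neiman carving nor the ``each node is reached by $O(\log n)$ centers'' observation from \Cref{thm:NetDecomp-with-Shared-Rand} changes that: a bad node can still have $\Delta$ bad neighbors. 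Second, the factor $2^{-\Omega(kN)}$ presupposes that the $N$ failure events in a connected subgraph are essentially independent, but nodes in a connected subgraph of $G[B]$ are close, so their fates under the $O(\log n)$-radius carving are correlated. To extract independence you would have to pass to a $\Theta(\log^2 n)$-separated subset, and the number of such nodes inside a size-$N$ component can be as small as $N/\Delta^{\Theta(\log^2 n)}$, destroying the bound.

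The paper sidesteps both issues by never bounding component sizes. It runs the Elkin--Neiman algorithm \emph{once} (per-node failure probability $1/n^2$, locality $t(n)=O(\log^2 n)$) and observes that failure events of nodes at pairwise distance $>2t(n)$ are genuinely independent; hence any $(2t(n)+1)$-separated subset of the bad set $\bar V$ has size $\le K$ except with probability $\binom{n}{K}n^{-2K}\le n^{-K}$, and this holds for \emph{every} $\Delta$. It then computes a $(2t(n)+1,\,O(t(n)\log n))$-ruling set $S\subseteq\bar V$, so $|S|\le K$, clusters every node of $\bar V$ to its nearest ruling-set node (vertex-disjoint BFS trees of radius $O(t(n)\log n)$), and runs the deterministic $2^{O(\sqrt{\log K})}$-round decomposition on the resulting \emph{cluster graph} of $\le K$ nodes. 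Choosing $K=2^{\eps\log^2 T(n)}$ gives $2^{O(\sqrt{\log K})}=T(n)^{O(\sqrt\eps)}\le T(n)$ and failure probability $n^{-K}$, exactly as stated. Note in particular that your $k=\Theta(T(n)/\log n)$ repetitions are unnecessary: a single run already gives the right error bound once you argue about separated sets rather than component sizes.
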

\begin{proof}
    For convenience, we only prove that the time complexities of the
  resulting \CONGEST and \LOCAL algorithms is $T(n)\cdot \poly\log n$.
  Note that by applying this weaker statement for an appropriately
  chosen smaller value of $T(n)$ and by choosing the constants $c$ and
  $\eps$ appropriately, the claim of the theorem then follows.

  The algorithm consists of 2 main steps. First, we apply a standard
  randomized polylog-time network decomposition algorithm
  $\mathcal{A}$ that succeeds w.h.p. The success probability of
  $\mathcal{A}$ is not as high as we need it to be, however, we show
  that after running $\mathcal{A}$, the number of sufficiently
  separated remaining nodes is small with the ``right'' success
  probability. This allows to compute a decomposition on the remaining
  nodes by using adeterministic network decomposition algorithm.

  For the first step, we apply the randomized decomposition algorithm
  of Elkin and Neiman \cite{elkin16_decomp}, which computes a strong
  diameter $\big(O(\log n), O(\log n)\big)$-decomposition in time
  $O(\log^2 n)$ in the \CONGEST model (and thus in particular a
  decomposition with congestion $1$). The parameters in the algorithm
  of \cite{elkin16_decomp} such that it succeeds with probability at
  least $1-1/n^2$. In the following, assume that the algorithm is run
  in this way. In particular, this also implies that for every node
  $v\in V$, the probability at $v$ is in a cluster after running the
  algorithm is at leat $1-1/n^2$.

  For a subset $S\subseteq V$, we say that $S$ is $d$-separated if any
  two nodes in $S$ are at distance at least $d$ in $G$. Let
  $t(n)=O(\log^2 n$) be the running time of the algorithm of
  \cite{elkin16_decomp}. Further, let $\bar{V}\subseteq V$ be the set
  of nodes that are not inside a cluster after running the
  algorithm. Note that the output of a node $u\in V$ can only depend
  on the random bits of nodes within distance at most $t(n)$ of
  $u$. Hence, the outputs of a set $S$ of nodes at pairwise distance
  at least $2t(n)+1$ are independent. In particular, the events that
  the nodes in $S$ are in $\bar{V}$ are independent. Let
  $S\subseteq V$ therefore be such a $(2t(n)+1)$-separated set of
  nodes. The probability that all nodes in $S$ are in $\bar{V}$ is at
  most $1/n^{2|S|}$. For any $K\geq 1$, the probability that $\bar{V}$
  contains a $(2t(n)+1)$-separated set $S$ of size at least $K$ is
  thus at most ${n\choose K}\cdot 1/n^{2K} \leq 1/n^K$. We choose the
  value $K$ such that this probability is upper bounded by the failure
  probability required by the lemma statement, i.e., we choose $K$
  such that $n^K \geq n^{2^{\eps\cdot \log^2 T(n)}}$ and we can thus
  choose $K = 2^{\eps\cdot \log^2 T(n)}$ for an appropriate constant
  $\eps>0$.

  Our goal is to reduce the remaining problem to deterministically
  computing a network decomposition on a graph of size at most $K$.
  As a first step, we compute a $(2t(n)+1)$-separated subset
  $S\subseteq\bar{V}$ of remaining nodes.  This can be done by
  computing a $\big(2t(n)+1, O(t(n)\log n)\big)$-ruling set $S$
  w.r.t.\ $\bar{V}$. Note that because $S$ is a $(2t(n)+1)$-separated
  set of nodes, with probability at least $1-1/n^K$, its size is at
  most $K$. For the rest of the proof, we therefore assume that $S$ is
  of size at most $K$. Because $S$ is a
  $\big(2t(n)+1, O(t(n)\log n)\big)$-ruling set w.r.t.\ $\bar{V}$,
  each remaining node $u\in \bar{V}$ has at least one node in $S$
  within distance $O(t(n)\log n)$ in $G$. By starting parallel BFS
  explorations from each node $v\in S$, we can therefore build a
  cluster $C_v$ of radius at most $O(t(n)\log n)$ around each node
  $v\in S$ such that each node $u\in \bar{V}$ is contained in the
  cluster of its closest node in $S$ (ties broken arbitrarily). For
  each cluster $C_v$, we obtain a spanning tree of depth
  $O(t(n)\log n)$ and the trees of different clusters are
  vertex-disjoint. We note that the trees might contain nodes of
  $V\setminus\bar{V}$. We define the cluster graph $G_C$ as the graph
  defined on the set of clusters $C_v$ for $v\in S$, where two
  clusters $C_u$ and $C_v$ are neighbors whenever there are nodes
  $x\in C_u\cap\bar{V}$ and $y\in C_v\cap\bar{V}$ that are neighbors
  in $G$. In \cite{ghaffari19_MIS}, it is possible to compute a
  strong-diameter
  $\big(2^{O(\sqrt{\log K})}, 2^{O(\sqrt{\log K})}\big)$-decomposition
  of such a cluster graph in time $2^{O(\sqrt{\log K})}$ times the
  maximum cluster radius. Note that since the clusters a
  vertex-disjoint, a strong-diameter decomposition of $G_C$ leads to a
  decomposition with congestion $1$ for the remaining nodes $\bar{V}$
  on $G$. By our choice of $K=2^{\eps\cdot \log^2 T(n)}$, we have
  $2^{c\sqrt{\log K}} = T^{\sqrt{\eps}\cdot c}$ and the \CONGEST model
  part of the claim of the lemma thus follows by choosing the constant
  $\eps>0$ sufficiently small. The claim about the \LOCAL model then
  follows in the \LOCAL model any $(d(n),c(n))$-decomposition can be
  turned into an $(O(\log n), O(\log n))$-decomposition in time
  $O(d(n)\cdot c(n)\cdot \log^2n)$
  \cite{awerbuch96,ghaffari2017complexity}.
\end{proof}

The following theorem shows that the error probability bound of
\Cref{thm:ep_alg} is essentially tight, in the following sense. A
stronger bound would be a major breakthrough and directly
imply a significant improvement over the 25-year old and
currently best deterministic network decomposition algorithm~\cite{panconesi95}, which has a time
complexity of $2^{O(\sqrt{\log n})}$. The following theorem also shows
that the approach of the above algorithm is best possible. A better
deterministic network decomposition algorithm resulting from a better
error probability bound would directly also allow to improve the above
algorithm to achieve this same error probability bound. In order to
work for general locally checkable problems (such that in particular
network decomposition is included), the following theorem is stated in
an rather technical way. We discuss what it means for specific
problems below.

\begin{theorem}\label{thm:ep_derandomization}
  Let $\calP(n)$ be a distributed graph problem, where the solution
  might depend on $n$. Assume that there is non-uniform randomized
  distributed algorithm that given $n$ as input solves $\calP(n)$ in
  time $T(n)$ with probability at least
  $1-2^{2^{\eps\log^\beta T(n)}}$ for some constant $\eps>0$ and some
  $\beta > 2$.  Then, there is a deterministic distributed algorithm
  that solves $\calP(N)$ in time $2^{O(\log^{1/\beta} n)}$, where $N$
  is chosen such that $T(N)=2^{c\cdot\log^{1/\beta} n}$ for some
  constant $c>0$. 

  If the problem $\calP(n)$ is strictly $d(n)$-locally checkable for
  some function $d(n)$, the deterministic algorithm solves $\calP(n)$
  rather than $\calP(N)$. Further, if the randomized algorithm works
  in the \CONGEST model, the deterministic algorithm also works in the
  \CONGEST model.
\end{theorem}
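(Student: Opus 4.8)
The plan is to derandomize the given algorithm by the same counting argument that underlies \Cref{lemma:basicderand}, but fed with a size parameter much larger than the true one. The point is that the failure probability $\delta(m)=2^{-2^{\eps\log^\beta T(m)}}$ of the randomized algorithm $\calA$, viewed as a function of the size parameter $m$ it is given, is \emph{not} small enough to run the basic union bound over all $m$-node graphs (which would need $\delta(m)<2^{-m^2}$); this is exactly why we cannot hope to derandomize at the same size scale. However, if we run $\calA$ with an inflated parameter $N\gg n$ and only insist that the resulting deterministic algorithm be correct on the $n$-node graphs we actually care about, then the guarantee of \Cref{def:errorprobability} still applies (as long as $N\ge n$) while the relevant graph family only has size $2^{O(n^2)}$, and the doubly-exponential smallness of $\delta(N)$ turns out to be far more than enough.

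Concretely, given the true graph size $n$, choose $N$ with $T(N)=2^{c\log^{1/\beta}n}$ for a constant $c$ to be fixed; assuming $T$ is non-decreasing, $N\ge n$ holds in the regime $T(n)\le 2^{c\log^{1/\beta}n}$ in which the statement is non-vacuous (e.g.\ whenever $T$ is polylogarithmic). Run $\calA$ with input parameter $N$: by \Cref{def:errorprobability} it produces a correct solution to $\calP(N)$ on \emph{every} graph with at most $N$ nodes, hence on every graph with at most $n$ nodes, with probability at least $1-\delta$, where $\delta=2^{-2^{\eps\log^\beta T(N)}}$. Since $\log T(N)=c\log^{1/\beta}n$, we have $\log^\beta T(N)=c^\beta\log n$ and therefore $\delta=2^{-n^{\eps c^\beta}}$. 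Now, exactly as in the proof of \Cref{lemma:basicderand}, regard the randomness of $\calA$ as a random function assigning a bit string to each node ID; the number of graphs on at most $n$ nodes with IDs from a fixed range $\{1,\dots,n^{c_0}\}$ is at most $n\cdot 2^{\binom{n}{2}}\cdot (n^{c_0})^{n}<2^{n^2}$ for large $n$. Picking $c$ so that $\eps c^\beta>2$ gives $\delta<2^{-n^2}$, so a union bound over this family shows that some fixed assignment of the random bits makes $\calA$ (run with parameter $N$) correct on all such graphs. Hardwiring this assignment into the algorithm yields a deterministic algorithm that, given $n$, solves $\calP(N)$ on all $n$-node graphs in time $T(N)=2^{c\log^{1/\beta}n}=2^{O(\log^{1/\beta}n)}$; as in \Cref{lemma:basicderand}, hardwiring the bits does not increase message sizes, so the \CONGEST property is preserved.

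For the refinements: if $\calP$ is \emph{strictly} $d(n)$-locally checkable, the validity of a solution on an $n$-node graph is witnessed by a $d(n)$-round local check that never refers to the input size parameter, so "correct for $\calP(N)$'' and "correct for $\calP(n)$'' both simply mean "correct''; hence the deterministic algorithm above in fact solves $\calP(n)$. The \CONGEST claim is the message-size observation already noted.

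I expect the only genuinely delicate point to be the bookkeeping around which graph family one derandomizes over, and in particular the decoupling of the size parameter $N$ handed to $\calA$ from the true size $n$: one must verify $N\ge n$ so that \Cref{def:errorprobability} still applies to the graphs of interest, track the parameter $N$ through the problem specification (explaining why in general one only obtains $\calP(N)$ rather than $\calP(n)$), and note that the resulting \CONGEST algorithm uses messages of $O(\log N)$ rather than $O(\log n)$ bits. The quantitative core — converting $T(N)=2^{c\log^{1/\beta}n}$ into $\delta=2^{-n^{\eps c^\beta}}$ and choosing $c$ with $\eps c^\beta>2$ — is routine once the right framing is in place.
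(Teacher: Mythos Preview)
Your proposal is correct and follows essentially the same approach as the paper: inflate the size parameter to $N$ with $T(N)=2^{c\log^{1/\beta}n}$, use \Cref{def:errorprobability} to get failure probability $2^{-n^{\eps c^\beta}}<2^{-n^2}$ on $n$-node graphs, and then invoke the counting argument of \Cref{lemma:basicderand}. The paper phrases the inflation as embedding $G$ as a component of an $N$-node graph $G'$ and applies \Cref{lemma:basicderand} as a black box, whereas you re-run its proof directly, but the content is identical; your additional remarks about needing $N\ge n$ and about the $O(\log N)$ message size in \CONGEST are valid caveats the paper glosses over.
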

\begin{proof}
    We apply a technique that was first used by Chang, Kopelowitz,
  and Pettie in \cite{chang16}. The basic idea is to ``lie'' to the
  algorithm about the number of nodes and pretend that the graph has
  size $N\gg n$ rather than $n$ in order to boost the success
  probability of the algorithm to value that is sufficiently clost to
  $1$ in order to apply \Cref{lemma:basicderand}.

  Assume that we are given a graph $G$ with $n$ nodes on which we have
  to solve $\calP(n)$. The nodes of $G$ cannot distinguish the
  graph $G$ from a graph $G'$ with $N$ nodes, which contains $G$ as
  one of its connected components. A randomized algorithm that is
  given $n$ as input and solves $\calP(n)$ on $G$ in time $T(n)$
  with probability at least $1-\delta(n)$ can therefore also be used
  to solve $\calP(N)$ on $G'$ (and thus also on $G$) in time
  $T(N)$ with probability at least $1-\delta(N)$. If we choose $N$
  such that $\delta(N)\leq 2^{-n^2}$ by \Cref{lemma:basicderand}, the
  resulting algorithm can be derandomized when applied on $G$.

  We thus have to choose the size $N$ of the ``virtual'' graph $G'$
  such that $2^{\eps\log^{\beta}T(N)}\geq n^2$. This implies that
  $\log T(N) \geq \big(\frac{2}{\eps}\big)^{1/\beta}\log^{1/\beta}n$,
  which is satisfied for $T(N)=2^{c\cdot \log^{1/\beta}n}$ for some
  constant $c>0$. We therefore get a deterministic algorithm for
  $\calP(N)$ with time complexity $2^{O(\log^{1/\beta}n)}$,
  where $N$ is chosen such that $T(N)=2^{c\cdot \log^{1/\beta}N}$. If
  the randomized algorithm works in the \CONGEST model, the
  deterministic algorithms also works in the \CONGEST model.

  If the problem $\calP(n)$ is strictly $d(n)$-locally checkable for
  some function $d(n)$, the checking algorithm cannot distinguish
  between the graphs $G$ and the component isomorphic to $G$ in
  $G'$. A valid solution to $\calP(N)$ on a graph of size $N$ is
  therefore also a valid solution to $\calP(n')$ for each connected
  component of $G$ of size $n'$.
\end{proof}

As the above theorem is stated for general graph problems $\calP(n)$
and general time complexities $T(n)$ (which both can depend on $n$ in
various ways), it is interesting to discuss what the theorem implies
for specific choices of $T(n)$ and for concrete graph problems. We
first consider the case of randomized algorithms that have running
times that are between $\poly(\log n)$ and $2^{O(\sqrt{\log
    n})}$.
That is, we consider running times that are between the best
randomized (in the classic sense) and deterministic time complexities
for network decomposition and many other important problems. The
following corollary follows immediately from
\Cref{thm:ep_derandomization}.

\begin{corollary}\label{cor:specificTn}
  Let $\alpha>\beta>2$ be two constants, let $\calP(n)$ be a
  distributed graph problem, and assume that there is non-uniform
  randomized distributed algorithm that given $n$ as input solves
  $\calP(n)$ in time $2^{O(\log^{1/\alpha}n)}$ with probability at
  least $1-2^{2^{\eps\log^{\beta/\alpha} n}}$ for some constant
  $\eps>0$.  Then, there is a deterministic distributed algorithm that
  solves $\calP(N)$ in time $2^{O(\log^{1/\beta} n)}$, where $N$ is
  chosen such that $\log N = \Theta(\log^{\alpha/\beta}n)$ for some constant
  $c>0$. 

  If the problem $\calP(n)$ is strictly $d(n)$-locally checkable for
  some function $d(n)$, the deterministic algorithm solves $\calP(n)$
  rather than $\calP(N)$. Further, if the randomized algorithm works
  in the \CONGEST model, the deterministic algorithm also works in the
  \CONGEST model.
\end{corollary}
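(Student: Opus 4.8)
The plan is to derive this corollary by simply specializing \Cref{thm:ep_derandomization} to the running time $T(n) = 2^{O(\log^{1/\alpha} n)}$. First I would record that for this $T$ we have $\log T(n) = \Theta(\log^{1/\alpha} n)$ and hence $\log^{\beta} T(n) = \Theta(\log^{\beta/\alpha} n)$. Consequently the hypothesized success probability $1 - 2^{-2^{\eps \log^{\beta/\alpha} n}}$ of the given randomized algorithm is, after absorbing the implicit constants into a possibly smaller $\eps' > 0$, exactly of the form $1 - 2^{-2^{\eps' \log^{\beta} T(n)}}$, which is what \Cref{thm:ep_derandomization} requires. So the hypothesis of \Cref{thm:ep_derandomization} is met and I can invoke it directly.

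Second, I would unwind its conclusion. \Cref{thm:ep_derandomization} gives a deterministic algorithm for $\calP(N)$ with running time $2^{O(\log^{1/\beta} n)}$, where $N$ is the value satisfying $T(N) = 2^{c \log^{1/\beta} n}$ for the constant $c$ produced by the theorem. It only remains to make $N$ explicit: since $T(N) = 2^{\Theta(\log^{1/\alpha} N)}$ by assumption, the equation $T(N) = 2^{c\log^{1/\beta} n}$ is equivalent to $\log^{1/\alpha} N = \Theta(\log^{1/\beta} n)$, that is, $\log N = \Theta(\log^{\alpha/\beta} n)$, which is the stated form. Here the hypothesis $\alpha > \beta$ is exactly what guarantees $\log N = \Theta(\log^{\alpha/\beta} n) \gg \log n$, i.e.\ that the ``virtual'' graph $G'$ of size $N$ used inside the proof of \Cref{thm:ep_derandomization} is genuinely larger than the input graph, as that argument requires. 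The running-time bound $2^{O(\log^{1/\beta} n)}$ carries over verbatim.

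Finally, the two remaining claims of the corollary are inherited with no extra work: if $\calP(n)$ is strictly $d(n)$-locally checkable then \Cref{thm:ep_derandomization} already yields an algorithm for $\calP(n)$ itself rather than $\calP(N)$, and if the randomized algorithm is a \CONGEST algorithm then so is the resulting deterministic one. I do not expect any real obstacle here; the only thing to be careful about is the bookkeeping of the three constants in play (the $O(\cdot)$ hidden in the exponent of $T$, the $\eps$ in the success probability, and the $c$ governing $N$), so that the driving inequality $2^{\eps \log^{\beta} T(N)} \geq n^2$ from the proof of \Cref{thm:ep_derandomization} indeed holds for the chosen $N$ — but this is routine and does not affect the asymptotic statement.
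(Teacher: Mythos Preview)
Your proposal is correct and takes exactly the approach the paper intends: the paper offers no separate proof at all, simply noting that the corollary ``follows immediately from \Cref{thm:ep_derandomization},'' and your write-up is precisely the explicit substitution $T(n)=2^{O(\log^{1/\alpha} n)}$ together with the routine algebra $\log^{\beta} T(n)=\Theta(\log^{\beta/\alpha} n)$ and $\log N=\Theta(\log^{\alpha/\beta} n)$ that this entails.
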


Note that if the parameters $\calP(n)$ only depend polylogarithmically
on $n$, the parameters of $\calP(N)$ also
depend polylogarithmically on $n$. For example if $\calP(n)$ is the
problem of computing an $\big(O(\log n),O(\log n)\big)$-network
decomposition, then $\calP(N)$ refers to the problem of computing an
$\big(O(\log^{\alpha/\beta} n),O(\log^{\alpha/\beta} n)\big)$-network decomposition.
Further, if $\calP$ is an LCL problem (e.g., MIS or $(\Delta+1)$-coloring), $\calP$ is strictly $O(1)$-locally checkable. Hence, for this important and
widely studied class of problems, the stronger versions of
\Cref{thm:ep_derandomization} and \Cref{cor:specificTn} hold, i.e., in
both cases we get a deterministic $2^{O(\log^{1/\beta} n)}$-time
algorithm that solves $\calP$ in its original form. The following
corollary shows that the same (and actually something slightly
stronger) also holds for the network decomposition problem in the
\LOCAL model.

\begin{corollary}\label{cor:decomposition}
  Assume that there is non-uniform randomized distributed algorithm
  that given $n$ as input computes a $(T(n),T(n))$-decomposition in
  time $T(n)$ with probability at least
  $1-2^{2^{\eps\log^\beta T(n)}}$ for some constant $\eps>0$ and some
  $\beta > 2$. Then, there exists a deterministic
  $2^{O(\log^{1/\beta}n)}$-round algorithm to compute a
  $\big(O(\log n), O(\log n)\big)$-decomposition in the \LOCAL model.
\end{corollary}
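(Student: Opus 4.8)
The plan is to obtain this as a direct instantiation of \Cref{thm:ep_derandomization}, followed by a routine parameter-reduction step. Take $\calP(n)$ to be the problem of computing a $(T(n),T(n))$-decomposition of the input graph. The hypothesis of the corollary says precisely that $\calP(n)$ admits a non-uniform randomized distributed algorithm running in time $T(n)$ with success probability at least $1-2^{-2^{\eps\log^\beta T(n)}}$, which is exactly the hypothesis of \Cref{thm:ep_derandomization} for this $\beta$. Note that $\calP(n)$ is \emph{not} strictly locally checkable: checking whether a given decomposition meets the prescribed diameter and color bounds requires knowing $T(n)$, hence $n$. So we may invoke only the first (weaker) conclusion of \Cref{thm:ep_derandomization}, namely the one that produces an algorithm for $\calP(N)$ rather than for $\calP(n)$.

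Applying the theorem yields a deterministic \LOCAL algorithm that solves $\calP(N)$ in time $2^{O(\log^{1/\beta}n)}$, where $N$ is the value with $T(N)=2^{c\cdot\log^{1/\beta}n}$ for a suitable constant $c>0$. Unwinding the meaning of $\calP(N)$: run on our actual input graph $G$ (which has at most $n\le N$ nodes), this algorithm outputs a $\big(T(N),T(N)\big)$-decomposition of $G$, i.e., a decomposition whose clusters have diameter at most $2^{c\log^{1/\beta}n}$ and which uses at most $2^{c\log^{1/\beta}n}$ colors. Since $\beta>2$ we have $\log^{1/\beta}n = o(\sqrt{\log n})$, so these parameters are $2^{O(\log^{1/\beta}n)}$, in particular sub-$2^{O(\sqrt{\log n})}$.

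It remains to bring the parameters down to $\big(O(\log n),O(\log n)\big)$. For this I would use the standard \LOCAL transformation already invoked in the proof of \Cref{thm:ep_alg} (see \cite{awerbuch96,ghaffari2017complexity}): any $(d(n),c(n))$-decomposition can be turned into an $\big(O(\log n),O(\log n)\big)$-decomposition in $O\!\big(d(n)\cdot c(n)\cdot\log^2 n\big)$ rounds of the \LOCAL model. Plugging in $d(n)=c(n)=2^{c\log^{1/\beta}n}$, this costs $O\!\big(2^{2c\log^{1/\beta}n}\cdot\log^2 n\big)=2^{O(\log^{1/\beta}n)}$ rounds, since $\log^2 n = 2^{O(\log\log n)}$ is dominated by $2^{\Theta(\log^{1/\beta}n)}$. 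Composing the two deterministic procedures gives a deterministic \LOCAL algorithm computing an $\big(O(\log n),O(\log n)\big)$-decomposition in $2^{O(\log^{1/\beta}n)}$ rounds, as claimed. The argument is essentially bookkeeping on top of \Cref{thm:ep_derandomization}; the only points that need care are (i) noting that network decomposition with parameters tied to $n$ fails strict local checkability, so one is forced through $\calP(N)$ rather than $\calP(n)$, and (ii) verifying that the intermediate parameters $2^{c\log^{1/\beta}n}$ are small enough that the $\log^2 n$ overhead of the parameter-reduction step is absorbed into $2^{O(\log^{1/\beta}n)}$ — neither of which is a genuine obstacle.
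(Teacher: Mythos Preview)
Your proposal is correct and follows essentially the same two-step approach as the paper: apply \Cref{thm:ep_derandomization} to obtain a deterministic $(T(N),T(N))$-decomposition in $2^{O(\log^{1/\beta}n)}$ rounds, then invoke the \cite{awerbuch96} transformation to reduce the parameters to $\big(O(\log n),O(\log n)\big)$ within the same time bound. Your write-up is in fact more careful than the paper's, explicitly noting why only the weaker conclusion of \Cref{thm:ep_derandomization} applies and checking that the $\log^2 n$ overhead is absorbed.
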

\begin{proof}
    \Cref{thm:ep_derandomization} directly implies that there exists a
  deterministic $T(N)$-round algorithm to compute a
  $(T(N),T(N))$-decompostion, where $T(N) =
  2^{O(\log^{1/\beta}n)}$. In \cite{awerbuch96}, it is shown that
  given such a decomposition, one can compute a strong diameter
  $\big(O(\log n), O(\log n)\big)$-decomposition in time
  $2^{O(\log^{1/\beta}n)}$ deterministically in the \LOCAL model.
\end{proof}

The next theorem bounds the error probability that is needed to
derandomize a polylogarithmic-time randomized algorithm to a
polylogarithmic-time deterministic algorithm. As this result is most
interesting for strictly locally checkable graph problems, we only
state it for this case.

\begin{theorem}\label{thm:polylogderand}
  Let $\calP$ be a strictly $d(n)$-locally checkable graph problem for
  some function $d(n)$ or let $\calP$ be the problem of computing a
  network decomposition with $\poly(\log n)$ parameters. Further
  assume that there is randomized non-uniform distributed algorithm
  that solves $\calP$ in $\poly(\log n)$ time with probability at
  least $1-2^{-2^{\log^\eps n}}$ for some constants $\eps>0$. Then,
  there is a deterministic distributed $\poly\log n$-time algorithm
  for $\calP$.
\end{theorem}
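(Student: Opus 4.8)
The plan is to reuse the ``lie about $n$'' technique behind \Cref{lemma:basicderand} and \Cref{thm:ep_derandomization}, tuned to the regime at hand where the running time is merely $\poly\log n$ and the failure probability $\delta(m)=2^{-2^{\log^{\eps}m}}$ decays in $m$ fast enough that only a mildly inflated virtual size $N$ is needed. Let $\calA$ denote the assumed randomized non-uniform algorithm, with running time $T(m)\le\log^{k}m$ (a fixed polylogarithm) on instances of claimed size $m$ and failure probability $\delta(m)$. On an actual $n$-node graph $G$ I would run $\calA$ but hand it the fictitious parameter $N\gg n$; as in \Cref{thm:ep_derandomization} this is harmless, because the nodes of $G$ cannot tell $G$ apart from an $N$-node graph $G'$ containing $G$ as a connected component, so by \Cref{def:errorprobability} the resulting randomized procedure runs in time $T(N)$ and is incorrect on $G$ with probability at most $\delta(N)$.

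Next I would pick $N$ as small as possible subject to $\delta(N)\le 2^{-n^{2}}$. Since $\delta(N)=2^{-2^{\log^{\eps}N}}$, this is exactly $2^{\log^{\eps}N}\ge n^{2}$, i.e.\ $\log N\ge (2\log n)^{1/\eps}$; because $\eps>0$ is a constant it suffices to take $\log N=\Theta\!\big(\log^{1/\eps}n\big)$, that is $N=2^{\poly\log n}$. With this choice I would derandomize \emph{only over the instance size I actually care about}: the family $\calG_n$ of labeled graphs with at most $n$ nodes satisfies $|\calG_n|<2^{n^{2}}$, hence $\delta(N)\le 2^{-n^{2}}<1/|\calG_n|$, and a union bound over $\calG_n$ fixes one setting of the random bits under which the procedure is correct on every graph of size at most $n$, in particular on $G$. (Running the union bound over $\calG_n$ rather than over $\calG_N$ is essential; the latter would require $\delta(N)\le 2^{-N^{2}}$, which is impossible for $\eps<1$.) This yields a deterministic non-uniform algorithm whose running time is $T(N)\le\log^{k}N=\big(\Theta(\log^{1/\eps}n)\big)^{k}=\poly\log n$; and since the derandomization never enlarges messages, it stays in \CONGEST whenever $\calA$ does.

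It remains to pass from ``solves $\calP(N)$'' to ``solves $\calP$''. If $\calP$ is strictly $d(n)$-locally checkable, then, exactly as in the final paragraph of the proof of \Cref{thm:ep_derandomization}, the checker's radius is $d(|V(G)|)=d(n)$ and it cannot distinguish $G$ from its isomorphic copy in $G'$, so any solution valid for $\calP(N)$ restricts to one valid for $\calP(n)$ on $G$. If instead $\calP$ is the task of computing a network decomposition with $\poly\log n$ parameters, the derandomized algorithm run with parameter $N$ outputs a $\big(\poly\log N,\poly\log N\big)$-decomposition of $G$, and since $\log N=\poly\log n$ we have $\poly\log N=\poly\log n$, so this is a decomposition with $\poly\log n$ parameters, as wanted.

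The one genuinely delicate point, I expect, is the parameter bookkeeping: verifying that inflating $n$ to $2^{\poly\log n}$ keeps every relevant quantity polylogarithmic — both the running time $T(N)\le\poly\log N$ and, in the decomposition case, the output parameters $\poly\log N$ — and that the passage from $\calP(N)$ back to $\calP(n)$ is justified, which is precisely where strict local checkability (respectively, the polylogarithmic form of the decomposition parameters) enters. Everything else is a direct instance of the derandomization argument already spelled out in \Cref{lemma:basicderand} and \Cref{thm:ep_derandomization}.
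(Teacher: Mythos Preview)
Your proposal is correct and follows essentially the same approach as the paper: lie to the algorithm with a virtual size $N$ satisfying $\log N \ge (2\log n)^{1/\eps}$, then invoke \Cref{lemma:basicderand} via the union bound over $\calG_n$, and observe that $T(N)=\poly\log N=\poly\log n$. Your treatment of the network decomposition case (directly noting $\poly\log N=\poly\log n$) is slightly more elementary than the paper's, which appeals to the transformation of \Cref{cor:decomposition}; both are fine given the theorem only asks for $\poly\log n$ parameters.
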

\begin{proof}
    We use the same basic technique as in the proof of
  \Cref{thm:ep_derandomization}. We again lie to the randomized
  algorithm about the number of nodes and pretend that the number of
  nodes is $N\gg n$. We choose $N$ such that $2^{\log^\eps N}\geq n^2$
  and thus $\log N \geq (2\log n)^{1/\eps}$. A running time that is
  polylogarithmic in $N$ is therefore also polylogarithmic in $n$,
  which proves the theorem. The result for network decompositions
  follows in the same way as in \Cref{cor:decomposition}.
\end{proof}

\paragraph{Remark:}
One could of course get similar results for other time complexity
bounds. In the same way as in \Cref{thm:polylogderand}, it can for
example be shown that if we have a randomized quasi-polylogarithmic
time algorithm with success probability
$1-2^{-2^{2^{(\log\log n)^{\eps}}}}$ for some constant $\eps>0$, we
can derandomize it to a quasi-polylogarithmic deterministic algorithm
for the same problem. By quasi-polylogarithmic running time, we mean a
running time of the form $2^{(\log\log n)^c}$ for some constant
$c>0$.


\bibliography{references}
\bibliographystyle{alpha}
\appendix
\end{document}